\newtheorem{lem}{Lemma}
\newtheorem{prop}{Proposition}
\newtheorem{cor}{Corollary}
\theoremstyle{definition}	
\newtheorem{defn}{Definition}
\theoremstyle{remark}
\newtheorem{rem}{Remark}
\newcommand{\tabincell}[2]{\begin{tabular}{@{}#1@{}}#2\end{tabular}}
\begin{document}
	\author{
		\IEEEauthorblockN{Qi Yu\IEEEauthorrefmark{1}\IEEEauthorrefmark{2}, Wei Dai\IEEEauthorrefmark{2}, Zoran Cvetkovi\'c\IEEEauthorrefmark{3}, Jubo Zhu\IEEEauthorrefmark{1}}\\
		\IEEEauthorblockA{\IEEEauthorrefmark{1}College of Liberal Arts and Sciences, National University of Defense Technology, Changsha, China}\\
		\IEEEauthorblockA{\IEEEauthorrefmark{2}Department of Electrical and Electronic Engineering, Imperial College London, UK}\\
		\IEEEauthorblockA{\IEEEauthorrefmark{3}Department of Informatics, King’s College London, UK}
		\thanks{This work is supported by the National Key Research and Development Program of China 2017YFB0502703, the National Natural Science Foundation of China under Grant 61671015, China Scholarship Council and Royal Society International Exchanges 2017 Cost Share (with China). The conference version of this paper was presented at the International Conference on Acoustics, Speech, and Signal Processing (ICASSP), Brighton, UK, May 2019.}
		\thanks{The authors also thank suggestions at ICASSP 2019 on structured total least squares given by Prof. Yoram Bresler from University of Illinois Urbana-Champaign and Dr. Konstantin Usevich from French National Center for Scientific Research. }		
	}
	\title{Dictionary Learning with BLOTLESS Update}
	\maketitle
	\begin{abstract}
		Algorithms for learning a dictionary \textcolor{black}{to sparsely represent a given dataset} typically alternate between sparse coding and dictionary update stages. Methods for dictionary update aim to minimise expansion error by updating dictionary vectors and expansion coefficients given patterns of non-zero coefficients obtained in the sparse coding stage. We propose a block total least squares (BLOTLESS) algorithm for dictionary update. BLOTLESS updates a block of dictionary elements and the corresponding sparse coefficients simultaneously. In the error free case, three necessary conditions for exact recovery are identified. Lower bounds on the number of training data are established so that the necessary conditions hold with high probability. Numerical simulations show that the bounds approximate well the number of training data needed for exact dictionary recovery. Numerical experiments further demonstrate several benefits of dictionary learning with BLOTLESS update compared with state-of-the-art algorithms especially when the amount of training data is small.
	\end{abstract}
	
\section{Introduction \label{sec:Intro}}
Sparse signal representation has found a wide range of applications, including image denoising \cite{elad2006image,liu2017weighted}, image in-painting\cite{elad2006image}, image deconvolution \cite{bronstein2005blind}, image super-resolution \cite{yang2010image,dai2017sparse}, etc. 
The key idea behind the concept of sparse representation is that natural signals tend to have sparse representations under certain bases/dictionaries. 
Hence, finding a dictionary under which a given data set can be represented in a sparse manner, has become a very active area of research. 
Although numerous analytical dictionaries exist, including Fourier basis\cite{Cvetkovic2000On}, discrete cosine transform (DCT) dictionaries, wavelets\cite{Cvetkovic1995Discrete}, curvelets \cite{candes2000curvelets}, etc., the need to adapt to properties of specific data sets has long been driving research efforts towards the development of efficient algorithms for dictionary learning \cite{olshausen1996emergence,aharon2006k}. More formally, dictionary learning is the problem of finding a dictionary $\bm{D}\in \mathbb{R}^{m\times l}$ of $l$ vectors in $\mathbb{R}^{m}$ such that the $n$ training samples in $\bm{Y}\in \mathbb{R}^{m\times n}$ can be written as $\bm{Y}=\bm{D}\bm{X}$, where the coefficient matrix $\bm{X}\in \mathbb{R}^{l\times n}$ is sparse. Of particular interest is overcomplete dictionary learning where the number of dictionary items is larger than the data dimension, i.e., $l>m$, and the number of the training samples is typically much larger than the size of the dictionary,   $n \gg l$. Dictionary learning is a nonconvex bilinear inverse problem, very challenging to solve in general.
	
The bilinear dictionary learning problem is typically approached by alternating between two stages: sparse coding and dictionary update \cite{olshausen1996emergence, aharon2006k, engan1999method, kreutz2003dictionary, dai2012simultaneous, tosic2011dictionary}. 
In the sparse coding stage, the goal is to find sparse representations $\bm{X}$ of training samples $\bm{Y}$ for a given dictionary $\bm{D}$. For that purpose, scores of algorithms have been developed. They can be divided into two main categories. 
The first category consists of greedy algorithms, including orthogonal matching pursuit (OMP) \cite{tropp2007signal}, regularized orthogonal matching pursuit (ROMP) \cite{needell2009uniform}, subspace pursuit (SP) \cite{dai2009subspace}, etc. 
In the second category, sparse coding is formulated as a convex optimization problem where $\ell_1$-norm is used to promote sparsity \cite{chen2001atomic}, and then optimization techniques, e.g. the fast iterative shrinkage-thresholding algorithm (FISTA) \cite{beck2009fast}, can be applied. Reviews of sparse recovery algorithms can be found in \cite{tropp2010computational}. 
	
The goal of the dictionary update  is to refine the dictionary so that the training samples $\bm{Y}$  have more accurate sparse representations given indices of non-zero coefficients obtained in the sparse coding stage. 
In the probabilistic framework, one may apply either maximum likelihood (ML) estimator \cite{olshausen1996emergence} or maximum a posteriori (MAP) estimator \cite{kreutz2003dictionary}, and then solve them by using gradient decent procedures. 
In the context of ML formulation \cite{olshausen1996emergence}, Engan {\sl et al.\ } \cite{engan1999method} proposed the method of optimal directions (MOD) where the sparse coefficients $\bm{X}$ are fixed and the dictionary update problem is cast as a least squares problem which can be solved efficiently;
modifications of MOD were subsequently proposed in \cite{aase2001optimized,skretting2001sparse,engan2007family}.

Recently an alternative approach for dictionary update has become popular, where both the dictionary and the sparse coefficients are updated \emph{simultaneously} with a given sparsity pattern. The representative algorithms include the famous K-SVD algorithm \cite{aharon2006k,smith2013improving} and SimCO \cite{dai2012simultaneous}. 
The crux of K-SVD \cite{aharon2006k} algorithm is to update dictionary items and their corresponding sparse coefficients simultaneously, sequentially one by one. K-SVD was subsequently extended to allow simultaneous update of multiple dictionary elements and corresponding coefficients \cite{smith2013improving}. SimCO \cite{dai2012simultaneous}, of which K-SVD is a special case, goes further and updates the whole dictionary and sparse coefficients simultaneously. The main idea of SimCO is that given a sparsity pattern, the sparse coefficients can be viewed as a function of the dictionary. As a result, the dictionary update becomes a nonconvex optimisation problem with respect to the dictionary. The optimisation is then preformed using the gradient descent method combined with a heuristic sub-routine designed to deal with singular points which can prevent from the convergence to the global minimum\cite{dai2012simultaneous}.
	
Due to the non-convexity of dictionary learning problem, it is challenging to understand under which conditions exact dictionary recovery is possible and which method is optimal in achieving that. Following early efforts on theoretical analysis of exact dictionary recovery \cite{aharon2006uniqueness,hillar2015can,remi2010dictionary,geng2014local,schnass2015local,schnass2014ident,schnass2016conv}, more recently, Spielman et. al. \cite{spielman2012exact} studied dictionary learning problem with complete dictionaries where the dictionary can be presented as a square matrix. By solving a certain sequence of linear programs, they showed that one can recover a complete dictionary $\bm{D}$ from $\bm{Y} = \bm{D}\bm{X}$ when $\bm{X}$ is a sparse random matrix with $O(\sqrt{m})$ nonzeros per column. In \cite{agarwal2014learning,agarwal2013exact,arora2014new,arora2015simple}, the authors propose algorithms which combine clustering, spectral initialization, and local refinement to recover overcomplete  and incoherent dictionaries.

Again these algorithms succeed when $\bm{X}$ has $O(\sqrt{m})$ nonzeros per column. The work in \cite{barak2015dictionary} provides a polynomial-time algorithm that recovers a large class of over-complete dictionaries when $\bm{X}$ has $O(m^{1-\delta})$ nonzeros per column	for any constant $\delta \in (0,1)$. However, the proposed algorithm runs in super-polynomial time when the sparsity level goes up to $O(m)$. Similarly, in \cite{arora2014more} the authors proposed a super-polynomial time algorithm that guarantees recovery with close to $O(m)$ nonzeros per column. Sun et al. \cite{sun2017complete,sun2017complete2}, on the other hand,  proposed a polynomial-time algorithm that provably recovers complete dictionary $\bm{D}$ when $\bm{X}$ has $O(m)$ nonzeros per column and the size of training samples is $O(m^2\log(m))$.

This paper addresses the dictionary update problem, where both the dictionary and the sparse coefficients are updated, for a given sparsity pattern. Whilst it is a sub-problem of the overall dictionary learning, it is nevertheless an important step towards solving the overall problem, and its bilinear nature makes it nonconvex and hence very challenging to solve. Our main contributions are as follows.

\begin{itemize}[itemindent=2em,leftmargin=0pt]
	\item BLOTLESS simultaneously updates a block of dictionary items and the corresponding sparse coefficients. Inspired by ideas presented in \cite{ling2018self,gribonval2012blind}, in BLOTLESS the bilinear nonconvex block update problem is transformed into a linear least squares problem, which can be solved efficiently.
	
	\item For the error-free case, when the sparsity pattern is known exactly, three necessary conditions for unique recovery are identified, expressed in terms of lower bounds on the number of training data. Numerical simulations show that the theoretical bounds well approximate the empirical number of training data needed for exact dictionary recovery. In particular, we show that the number of training samples needed is $O(m)$ for complete dictionary update.
	
	\item BLOTLESS is numerically demonstrated robust to errors in the assumed sparsity pattern.  When embedded into the overall dictionary learning process, it leads to faster convergence rate and less training samples needed compared to state-of-the-art algorithms including MOD, K-SVD and SimCO. 
	
\end{itemize}

Our work is inspired by a recent work \cite{ling2018self} where bilinear inverse problems are formulated as linear inverse problems. The main difference is that our theoretical analysis and algorithm designs in Sections \ref{sec:ExactUpdate} and \ref{sec:Update-Uncertainty} are specifically tailored to the generic dictionary update problem while the focus of \cite{ling2018self} is self-calibration which can be viewed as dictionary learning with only diagonal dictionaries. 
Parts of the results in this paper were presented in the conference paper \cite{Qi2019}. In this journal paper, we refine the bounds in Section \ref{sec:ExactUpdate} and provide detailed proofs, add two total least squares algorithms in Section \ref{sec:Update-Uncertainty}, and include more simulation results in Section \ref{sec:Simulation} to support the design of the algorithm. 

This paper is organized as follows. Section \ref{sec:Background} briefly reviews dictionary learning and update methods. Section \ref{sec:ExactUpdate} discusses an ideal case where exact dictionary recovery is possible, for which a least squares method is developed and analysed. In Section \ref{sec:Update-Uncertainty}, the general case of dictionary update is discussed, and the least squares method is extended to total least squares methods, leading to BLOTLESS. Results of extensive simulations are presented in Section \ref{sec:Simulation} and conclusions are drawn in Section \ref{sec:Conclusion}.

\subsection{Notation}
In this paper, $\| \cdot \|_2$ denotes the $\ell_2$ norm and $\| \cdot \|_F$ stands for the Frobenius norm. For a positive integer $n$, define  $[n]=\{1,2,\cdots,n\}$. For a  matrix $\bm{M}$, $\bm{M}_{i,:}$ and $\bm{M}_{:,j}$ denote the $i$-th row and the $j$-th column of $\bm{M}$ respectively. Consider the sparse coefficient matrix $\bm{X}$. Let $\Omega$ be the support set of $\bm{X}$, i.e., the index set that containing indices of all nonzero entries in $\bm{X}$. Let $\Omega_i$ be the support set of the row vector $\bm{X}_{i,:}$. Then $\bm{X}_{i,\Omega_i}$ is the row vector obtained by keeping the nonzero entries of $\bm{X}_{i,:}$ and removing all its zero entries. Symbols $\bm{I}$, $\bm{1}$, and $\bm{0}$ denote the \textcolor{black}{identity matrix}, the vector of which all the entries are 1, and the vector with all zero entries, respectively. For a given set  $\Omega_i \subset [n]$, $\Omega_i^c$ denotes its complement in $[n]$. 

\section{Dictionary Learning: The Background
\label{sec:Background}}
Dictionary learning is the process of finding a dictionary which sparsely represents given training samples. Let $\bm{Y}\in \mathbb{R}^{m\times n}$ be the training sample matrix, where $m$ is the dimension of training sample vectors and $n$ is the number of training samples. The overall dictionary learning problem is often formulated as:
\begin{equation}
\min_{\bm{D},\bm{X}}\|\bm{Y} - \bm{D}\bm{X}\|^{2}_{F}, \; {\rm s.t.}\; \|\bm{X}_{:,j}\|_{0} \le k,\; \forall j\in [n],	\label{First_DL_model}
\end{equation}
where $\bm{D}\in \mathbb{R}^{m\times l}$ is the dictionary, $\bm{X}\in \mathbb{R}^{l\times n}$ is the sparse coefficient matrix, the $\ell_0$ pseudo-norm $\|\cdot\|_0$ gives the number of non-zero elements, also known as sparsity level, and $k < l$ is the upper bound of the sparsity level.  

Dictionary learning algorithms typically iterate between two stages: sparse coding and dictionary update. The goal of sparsity coding is to find a sparse coefficient matrix $\bm{X}$ for a given dictionary $\bm{D}$. One way to achieve this is to solve the problem 
\begin{equation}
\min_{\bm{X}_{:,j}} \|\bm{Y}_{:,j}-\bm{D}\bm{X}_{:,j}\|_2,\; {\rm s.t.}\; \|\bm{X}_{:,j}\|_{0}\le k,\; \forall j\in [n],
\end{equation}
using greedy algorithms, for example OMP \cite{tropp2007signal} or SP \cite{dai2009subspace}.

In the dictionary update stage, the goal is to refine the dictionary with either fixed sparse coefficients or a fixed sparse pattern, i.e. fixed locations of non-zero coefficients. The famous MOD method \cite{engan1999method} falls into the first category. With fixed sparse coefficients, dictionary update is simply a least squares problem 
\begin{equation*}
\min_{\bm{D}} \|\bm{Y}-\bm{D}\bm{X}\|^2_{F}.
\end{equation*}

A more popular and advantageous approach is to simultaneously update the dictionary and nonzero sparse coefficients by fixing only the sparsity pattern. With this idea, dictionary update is then formulated as  \cite{aharon2006k,dai2012simultaneous,smith2013improving} 
\begin{equation}
\min_{\bm{D},\bm{X}} \|\bm{Y} - \bm{D}\bm{X}\|^2_{F},\ \text{s.t.}\ \mathcal{P}_{\Omega^c}(\bm{X}) = \bm{0}
\label{eq:DictUpdate}
\end{equation}
where $\mathcal{P}_{\Omega^c}( \bm{X})$ gives the vector formed by the entries of $\bm{X}$ indexed by $\Omega^c$. However, problem \eqref{eq:DictUpdate} is bilinear, nonconvex, and challenging to solve. 

Among many methods for solving \eqref{eq:DictUpdate}, we here briefly review K-SVD \cite{aharon2006k} and SimCO \cite{dai2012simultaneous}. K-SVD algorithm successively updates individual dictionary items $\bm{D}_{:,i}$ and the corresponding sparse coefficients $\bm{X}_{i,:}$ whilst keeping all other dictionary items and coefficients fixed:
\begin{equation}
 \min_{\bm{D}_{:,i}, \; \bm{X}_{i,:}} \; \| \left( \bm{Y} - \bm{D}_{:,\{i\}^c} \bm{X}_{\{i\}^c,:}\right)_{:,\Omega_i} - \bm{D}_{:,i} \bm{X}_{i,\Omega_i}\|^{2}_{F}. 
\end{equation}
The optimal solution can be obtained by taking the largest left and right singular vectors of the matrix $\left( \bm{Y} - \bm{D}_{:,\{i\}^c} \bm{X}_{\{i\}^c,:}\right)_{:,\Omega_i}$.

The idea of SimCO is to formulate the dictionary update problem in \eqref{eq:DictUpdate} as a nonconvex optimisation problem with respect to the overall dictionary, that is
\begin{equation}
\begin{aligned}
\min_{\bm{D}} \;\; \underbrace{\min_{\bm{X}:\; \mathcal{P}_{\Omega^c}(\bm{X}) = \bm{0}} \; \|\bm{Y} - \bm{D}\bm{X}\|^2_{F}}_{f(\bm{D})}.
\end{aligned}
\end{equation}
and then solve it using gradient descent of $\bm{D}$. This leads to an update of all dictionary items and sparse coefficients simultaneously. K-SVD can be viewed as a special case of SimCO where the objective function reads
\begin{equation*}
\begin{aligned}
\min_{\bm{D_{:,i}}} \;\; \min_{\bm{X_{i,:}}:\; \mathcal{P}_{\Omega_i^c}(\bm{X}_{i,:}) = \bm{0}} \; \|\bm{Y} - \bm{D}\bm{X}\|^2_{F}.
\end{aligned}
\end{equation*}

The focus of this paper is a novel solution to Problem (\ref{eq:DictUpdate}). 

\section{Exact Dictionary Recovery \label{sec:ExactUpdate} }
This section focuses on an ideal case that the dictionary can be exactly recovered. We assume that the training samples in $\bm{Y}$ are generated from $\bm{Y}=\bm{D}_0\bm{X}_0$ where $\bm{D}_0$ is a tall or square matrix ($m\geq l$) and the sparsity pattern of $\bm{X}_0$ is given. For compositional convenience, we focus on the case where
$\bm{D}_0$ is a square matrix, $\bm{D}_0 \in \mathbb{R}^{m\times m}$, as the same analysis is valid for a tall dictionary where $m > l$.

With given sparsity pattern denoted by $\Omega$, the dictionary update problem can be formulated as a bilinear inverse problem in which the goal is to find $\bm{D}$ and $\bm{X}$ such that 
\begin{equation}
\bm{Y} = \bm{D}\bm{X} \;\; \rm{and} \;\; \mathcal{P}_{\Omega^c}(\mathbf{X}) = \mathbf{0}.
\label{eq:bilinear}
\end{equation}
The constraint $\bm{Y} = \bm{D}\bm{X}$  is nonconvex. Generally speaking, it is challenging to solve (\ref{eq:bilinear}) and there are no guarantees for the global optimality of the solution. 

\subsubsection{Least Squares Solver}

Suppose that the unknown dictionary matrix $\bm{D}$ is invertible. The nonconvex  problem in (\ref{eq:bilinear}) can be translated into a convex problem by using a strategy similar to that explored in \cite{ling2018self}. Define $\bm{H}=\bm{D}^{-1}$. Then $\bm{H}\bm{Y} = \bm{X}$. The goal is now to find $\bm{H}$ and $\bm{X}$ such that 
\begin{equation}
\bm{H}\bm{Y} = \bm{X} \;\; \rm{and} \;\; \mathcal{P}_{\Omega^c}(\mathbf{X}) = \mathbf{0}~,
\label{eq:linear}
\end{equation}
or equivalently, 
\begin{equation}
\left[\bm{Y}^{T},-\bm{I}_{n\times n}\right]\left[\begin{array}{c}
\bm{H}^{T}\\
\bm{X}^{T}
\end{array}\right]=\bm{0}_{n\times m}\;\;{\rm and}\;\;\mathcal{P}_{\Omega^{c}}(\bm{X})=\bm{0},
\label{eq:lestsquare}
\end{equation}
where the subscripts are used to indicate matrix dimensions. In this manner the original bilinear problem (\ref{eq:bilinear}) is cast as an equivalent linear least squares problem. 

However, the formulation in (\ref{eq:lestsquare}) admits trivial solution $\bm{H}=\bm{0}$ and $\bm{X}=\bm{0}$. In fact, (\ref{eq:lestsquare}) admits at least $m$ linearly independent solutions. 

\begin{prop} \label{pro:m-solutions}
	There are at least $m$ linear independent solutions to the least squares problem in (\ref{eq:lestsquare}). 
\end{prop}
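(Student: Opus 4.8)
The plan is to exhibit $m$ explicit solutions by exploiting the fact that the constraints in (\ref{eq:lestsquare}) decouple across the rows of $\bm{H}$ and $\bm{X}$. Writing $\bm{H}\bm{Y} = \bm{X}$ row by row gives $\bm{H}_{i,:}\bm{Y} = \bm{X}_{i,:}$ for each $i$, while the support constraint $\mathcal{P}_{\Omega^c}(\bm{X}) = \bm{0}$ merely forces $\bm{X}_{i,:}$ to be supported on $\Omega_i$. Hence what happens in one row has no bearing on the others, and a valid solution can be assembled row by row.

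First I would fix the reference solution $\bm{H}_0 := \bm{D}_0^{-1}$ together with $\bm{X}_0$, which satisfies (\ref{eq:lestsquare}) because $\bm{H}_0\bm{Y} = \bm{D}_0^{-1}\bm{D}_0\bm{X}_0 = \bm{X}_0$ and $\bm{X}_0$ has support $\Omega$. Then, for each $i \in [m]$, I would define a ``single-row'' pair $(\bm{H}^{(i)}, \bm{X}^{(i)})$ whose $i$-th rows coincide with $(\bm{H}_0)_{i,:}$ and $(\bm{X}_0)_{i,:}$ respectively, and whose remaining rows are all zero. The decoupling observation shows at once that each such pair satisfies both constraints: on row $i$ we inherit the identity $(\bm{H}_0)_{i,:}\bm{Y} = (\bm{X}_0)_{i,:}$ with the correct support $\Omega_i$, and on every other row both sides vanish, so $\mathcal{P}_{\Omega^c}(\bm{X}^{(i)}) = \bm{0}$ as well.

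It remains to establish linear independence of the $m$ pairs, and this is where the invertibility of $\bm{D}_0$ enters. Since $\bm{D}_0$ is invertible, no row of $\bm{H}_0 = \bm{D}_0^{-1}$ can be identically zero, as otherwise $\bm{H}_0$ would be singular. Considering a vanishing linear combination $\sum_{i} c_i (\bm{H}^{(i)}, \bm{X}^{(i)}) = (\bm{0},\bm{0})$ and reading off the $i$-th row of the $\bm{H}$ component yields $c_i (\bm{H}_0)_{i,:} = \bm{0}$, which forces $c_i = 0$ because $(\bm{H}_0)_{i,:} \neq \bm{0}$. This produces the required $m$ linearly independent solutions.

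I expect no serious obstacle here; the only point requiring care is confirming that each isolated row is genuinely nonzero, which is exactly the consequence of invertibility used above. Note also that the statement only asserts a lower bound, so it suffices to exhibit these $m$ solutions, and there is no need to argue that they span the entire solution set.
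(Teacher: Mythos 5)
Your proposal is correct and is essentially the paper's own argument: both construct, for each $i\in[m]$, the solution obtained by keeping only the $i$-th row of $(\bm{H}_0,\bm{X}_0)=(\bm{D}_0^{-1},\bm{X}_0)$ and zeroing the rest, then conclude linear independence from the disjoint supports. Your version is marginally more careful in noting explicitly that each $(\bm{H}_0)_{i,:}$ is nonzero by invertibility of $\bm{D}_0$, a point the paper leaves implicit.
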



\begin{proof}
	This proposition is proved by construction. Let $\bm{H}_0 = \bm{D}_0^{-1}$. 
	Define matrix $\bm{Z}_{i} \in \mathbb{R}^{\left(m+n\right)\times m}$ by keeping the $i$-th column of the matrix $\left[\bm{H}_{0},\bm{X}_{0}\right]^{T}$ and setting all other columns to zero, that is,  $\left(\bm{Z}_{i}\right)_{:,i} = \left[ (\bm{H}_0)_{i,:} , (\bm{X}_0)_{i,:} \right]^{T}$
	and $\left(\bm{Z}_{i}\right)_{:,j} = \bm{0}$ for all $j\ne i$. 
	From the fact that $(\bm{H}_0)_{i,:} \bm{Y} = (\bm{X}_0)_{i,:}$, it is straightforward to verify that $\bm{Z}_i$, $i\in \left[ m \right]$, is a solution of (\ref{eq:lestsquare}). 
	
	The solutions $\bm{Z}_i$, $i\in \left[ m \right]$, are linearly independent. This can be easily verified by observing that the positions of nonzero elements in $\bm{Z}_i$ and $\bm{Z}_j$, $i \ne j$, are different. 		
\end{proof}


\subsubsection{Necessary Conditions for Unique Recovery
\label{subsub:NecessaryConditionUniqueRecovery} }

We now consider the uniqueness of the solution in more detail and derive necessary conditions for unique recovery. Two ambiguities can be identified in the dictionary update problem in (\ref{eq:lestsquare}). The first is permutation ambiguity. Let $\Omega_i$ and $\Omega_j$ be the support sets (the index set containing indices corresponding to nonzero entries) of the $i$-th and $j$-th row of $\bm{X}_0$. If $\Omega_i=\Omega_j$, then the tuple $(\bm{D}_0 \bm{P}_{i\Leftrightarrow j }, \bm{P}_{i\Leftrightarrow j } \bm{X}_0)$ is a valid solution of (\ref{eq:bilinear}), where $\bm{P}_{i\Leftrightarrow j }$ denotes the permutation matrix generated by permuting the $i$-th and $j$-th row of the identity matrix. On the other hand, there is no permutation ambiguity if $\Omega_i \ne \Omega_j$ for all $i \ne j$. In practice, the given sparsity pattern is typically diverse enough to avoid permutation ambiguity. 


The second is the scaling ambiguity which cannot be avoided. Let $\bm{S}$ be a diagonal matrix with nonzero diagonal elements. It is clear that the tuple $(\bm{D}_0 \bm{S}, \bm{S}^{-1} \bm{X}_0)$ is also a valid solution of (\ref{eq:bilinear}). All solutions of the form $(\bm{D}_0 \bm{S}, \bm{S}^{-1} \bm{X}_0)$ form an equivalent class. The scaling ambiguity can be addressed by ntroducing additional constraints. One option used in \cite{ling2018self} is that the sum of the elements in each row of $\bm{H}$ is one, i.e., $\sum_j \bm{H}_{i,j}=1, \forall i $. With these constraints, one has 
\begin{equation}
\bm{H}[\bm{Y},\bm{1}_{m\times 1}] = [\bm{X}, \bm{1}_{m \times 1}] \;\; \rm{and} \;\; \mathcal{P}_{\Omega^c}(\mathbf{X}) = \mathbf{0}.
\label{eq:LeastSquare-Scaling}
\end{equation}

Henceforth, we define unique recovery as unique up to the scaling ambiguity.
\begin{defn}[Unique Recovery] \label{def:UniqueRec}
	The  dictionary update problem (\ref{eq:bilinear}) is said to admit a unique solution if all solutions are of the form $\bm{D} = \bm{D}_0\bm{S}$ and $\bm{X} = \bm{S}^{-1}\bm{X}_0$ for some diagonal matrix $\bm{S}$ with nonzero diagonal elements. 
\end{defn} 

In the following, we identified three necessary conditions for unique recovery. 

\begin{prop}\label{pro:NecessaryBd}
	Assume that $\bm{D}_0$ is square and invertible. If the problem (\ref{eq:bilinear}) has unique solution, then it holds that 
	\begin{enumerate}
		\item \label{cond:1} 
		$n \geq n_0 = m+\frac{| \Omega |}{m}-1$.
		\item \label{cond:2}
		For all $i\in [m]$, the support set of the $i$-th row of $\bm{X}_0$, denoted by $\Omega_i$, satisfies $|\Omega_i^c | \ge m-1$.
		\item \label{cond:3}
		For all $i\in [m]$ and all $i^{\prime} \ne i$, $\exists j \in \left[ n \right] $ such that $(\bm{X}_0)_{i,j} = 0$ and $(\bm{X}_0)_{i^{\prime},j}\neq 0$.
	\end{enumerate}
\end{prop}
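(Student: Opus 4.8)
The plan is to treat conditions \ref{cond:2} and \ref{cond:3} by contraposition---assuming a condition fails and exhibiting a second solution that is not a rescaling of $(\bm{D}_0,\bm{X}_0)$---and then to obtain condition \ref{cond:1} as an arithmetic consequence of condition \ref{cond:2}. Throughout I would work in the linearised picture $\bm{H}\bm{Y}=\bm{X}$ of \eqref{eq:linear} with reference solution $\bm{H}_0=\bm{D}_0^{-1}$, read off row by row: the $i$-th row must satisfy $\bm{H}_{i,:}\bm{Y}_{:,\Omega_i^c}=\bm{0}$, since $\mathcal{P}_{\Omega^c}(\bm{X})=\bm{0}$ forces the entries of $\bm{H}_{i,:}\bm{Y}$ indexed by $\Omega_i^c$ to vanish. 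I would also use the standing fact that $\bm{X}_0=\bm{H}_0\bm{Y}$ has full row rank (equivalently $\bm{Y}$ has rank $m$), without which an invertible $m\times m$ dictionary could never be pinned down and the notion of unique recovery in Definition \ref{def:UniqueRec} would be vacuous.

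For condition \ref{cond:2}, the key observation is that the admissible $i$-th rows of $\bm{H}$ form exactly the left null space of $\bm{Y}_{:,\Omega_i^c}$, which contains $(\bm{H}_0)_{i,:}$. Unique recovery forces this null space to be one-dimensional, i.e.\ $\operatorname{rank}\bm{Y}_{:,\Omega_i^c}=m-1$; since $\bm{Y}_{:,\Omega_i^c}=\bm{D}_0(\bm{X}_0)_{:,\Omega_i^c}$ has rank at most $|\Omega_i^c|$, this already demands $|\Omega_i^c|\ge m-1$. To make this a rigorous necessity argument I would instead assume $|\Omega_{i_0}^c|\le m-2$ for some $i_0$, so that the left null space has dimension at least two, pick $\bm{v}$ in it independent of $(\bm{H}_0)_{i_0,:}$, and set $\bm{H}'=\bm{H}_0+\epsilon\,\bm{e}_{i_0}\bm{v}^{T}$. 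For small $\epsilon$ the matrix $\bm{H}'$ stays invertible, $\bm{H}'\bm{Y}$ still respects the support $\Omega$, yet row $i_0$ of $\bm{H}'$ is not proportional to that of $\bm{H}_0$, so $((\bm{H}')^{-1},\bm{H}'\bm{Y})$ is a genuinely new solution, contradicting uniqueness.

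For condition \ref{cond:3}, its negation says $\Omega_{i'}\subseteq\Omega_i$ for some $i\ne i'$. I would then add a multiple of row $i'$ to row $i$ via the elementary matrix $\bm{E}=\bm{I}+c\,\bm{e}_i\bm{e}_{i'}^{T}$ with $c\ne0$, taking $\bm{X}=\bm{E}\bm{X}_0$ and $\bm{D}=\bm{D}_0\bm{E}^{-1}$. Because the support of the new $i$-th row lies in $\Omega_i\cup\Omega_{i'}=\Omega_i$, the pair $(\bm{D},\bm{X})$ solves \eqref{eq:bilinear}; since $\bm{E}$ is non-diagonal and rows $i,i'$ of $\bm{X}_0$ are linearly independent, it is not a rescaling of $(\bm{D}_0,\bm{X}_0)$, again contradicting uniqueness. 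Finally, condition \ref{cond:1} drops out of condition \ref{cond:2}: rewriting $|\Omega_i^c|\ge m-1$ as $|\Omega_i|\le n-m+1$ and summing over $i\in[m]$ gives $|\Omega|\le m(n-m+1)$, i.e.\ $n\ge m-1+|\Omega|/m=n_0$.

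I expect the main obstacle to be the bookkeeping that certifies each constructed pair is a \emph{legitimate and inequivalent} solution rather than an artefact: namely that the perturbed $\bm{H}'$ (resp.\ the sheared $\bm{D}_0\bm{E}^{-1}$) remains invertible, that the altered rows genuinely leave the support set $\Omega$ unchanged, and---most importantly---that the new solution cannot be written as $(\bm{D}_0\bm{S},\bm{S}^{-1}\bm{X}_0)$ for any diagonal $\bm{S}$. This last point is exactly where the full-row-rank of $\bm{X}_0$ is indispensable, since it is what rules out the degenerate coincidence $\bm{E}\bm{X}_0=\bm{S}^{-1}\bm{X}_0$ and its analogue for the null-space perturbation.
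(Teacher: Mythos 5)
Your proposal is correct, and for conditions \ref{cond:1} and \ref{cond:3} it takes a genuinely different route from the paper. For condition \ref{cond:2} you and the paper rely on the same core fact---the admissible $i$-th rows of $\bm{H}$ are exactly the left null space of $\bm{Y}_{:,\Omega_i^c}$, so uniqueness forces ${\rm rank}\,\bm{Y}_{:,\Omega_i^c}=m-1$ and hence $|\Omega_i^c|\ge m-1$---though you make the necessity direction explicit with the rank-one perturbation $\bm{H}_0+\epsilon\,\bm{e}_{i_0}\bm{v}^T$, whereas the paper states the equivalence more informally. For condition \ref{cond:3} the paper argues from ${\rm span}(\bm{Y}_{:,\Omega_i^c})={\rm span}(\bm{D}_{:,\{i\}^c})$ that every column of $\bm{D}_{:,\{i\}^c}$ must ``participate in generating'' some column of $\bm{Y}_{:,\Omega_i^c}$; your contrapositive via the shear $\bm{E}=\bm{I}+c\,\bm{e}_i\bm{e}_{i'}^T$ (exploiting that the negation is exactly $\Omega_{i'}\subseteq\Omega_i$) is more self-contained and reduces the inequivalence to a diagonal rescaling to an explicit check against the full row rank of $\bm{X}_0$. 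For condition \ref{cond:1} the paper counts equations against unknowns in the normalised system \eqref{eq:LeastSquare-Scaling} ($(n+1)m+(nm-|\Omega|)$ versus $(n+m)m$), while you sum the per-row inequality $|\Omega_i|\le n-m+1$ from condition \ref{cond:2} over $i\in[m]$; the two counts are arithmetically identical (the paper's global count is just the sum of the per-row counts), but your route makes explicit that condition \ref{cond:1} is implied by condition \ref{cond:2}, which the paper does not remark on. The one hypothesis you must carry, and correctly flag, is that unique recovery already forces $\bm{X}_0$ (equivalently $\bm{Y}$) to have full row rank; this is used implicitly in the paper's proof as well.
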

\begin{proof}
	Necessary condition \ref{cond:1} is proved by using the fact that the solution of (\ref{eq:LeastSquare-Scaling}) is unique only if the number of equations is larger or equal than the number of unknown variables. The number of unknown variables in (\ref{eq:LeastSquare-Scaling}) is $(n+m)m$ while the number of equations in (\ref{eq:LeastSquare-Scaling}) is $(n+1)m + (nm-|\Omega|)$. Elementary calculations lead to the bound $n_0$.  
	
	The proof of the other two necessary conditions is based on the fact that
	\begin{equation*}
	(\bm{H}_0)_{i,:}\bm{Y} = (\bm{X}_0)_{i,:},
	\end{equation*}
    where $\bm{H}_0 = \bm{D}_0^{-1}$. To simplify the notations, we omit the subscript 0 from $\bm{H}_0$, $\bm{D}_0$ and $\bm{X}_0$ in the rest of this proof.
	Divide the sample matrix $\bm{Y}$ into two sub-matrices  $\bm{Y}_{:,\Omega_i}$ and $\bm{Y}_{:,\Omega_i^c}$. Then it holds that
	\begin{equation*} 
	\bm{H}_{i,:}\bm{Y}_{:,\Omega_i^{c}} = \bm{0}^{T},~{\rm and}~\bm{H}_{i,:}\bm{Y}_{:,j}\ne 0~\forall j\in \Omega_i.  
	\label{null}
	\end{equation*}
    ($\bm{H}_{i,:}^T$ is in the null space of $\bm{Y}_{:,\Omega_i^{c}}$.) Hence $\bm{H}_{i,:}$ is unique (up to a scaling factor) if and only if ${\rm span}(\bm{Y}_{:,\Omega_i^{c}})$ has dimension $m-1$. In this case, ${\rm span}(\bm{H}_{i,:}^T)$ is the null space of both ${\rm span}(\bm{Y}_{:,\Omega_i^{c}})$ and ${\rm span}(\bm{D}_{:,\{i\}^c})$. It is concluded that $\bm{H}_{i,:}$ is unique if and only if 
	${\rm span} ( \bm{Y}_{:,\Omega^{c}_{i}} ) 
	= {\rm span} ( \bm{D}_{ :,\left\{ i \right\} ^{c} } ) 
	$. 
	
	Necessary condition \ref{cond:2} follows directly from that ${\rm rank} ( \bm{Y}_{:,\Omega^{c}_{i}} ) = m-1$. 
	
	To  prove the last necessary condition, note first that the fact that ${\rm span} ( \bm{Y}_{:,\Omega^{c}_{i}} ) 
	= {\rm span} ( \bm{D}_{ :,\left\{ i \right\} ^{c} } ) 
	$ implies that each column of $ \bm{D}_{ :,\left\{ i \right\} ^{c} } $ participates in generating some columns of $\bm{Y}_{:,\Omega^{c}_{i}}$. That is, $\forall i^{\prime}\ne i$, $\bm{D}_{:,i^{\prime}}$ participates in generating $\bm{Y}_{:,j}$ for some $j \notin \Omega^{c}_{i}$. Necessary  condition \ref{cond:3} is therefore proved. 
	Note that condition \ref{cond:3} is not sufficient. It does not prevent the following rank deficient case: there exist $i_1^{\prime},i_2^{\prime}\in \left\{i\right\}^c$ such that both $ \bm{D}_{ :,i_1^{\prime} } $ and $ \bm{D}_{ :,i_2^{\prime} } $ only participate in generating a single sample in $\bm{Y}_{:,j}$ for some $j\in\Omega_i^{c}$.
\end{proof}

\subsubsection{Discussions on the Number of Samples \label{subsub:NumberOfSamples}}

We now study the number of samples $n$ needed to ensure that the necessary
conditions for unique recovery, as specified in Proposition \ref{pro:NecessaryBd}, hold with high probability. To that end we use the following probabilistic model: entries of $\bm{D}_0\in \mathbb{R}^{m \times m}$ are independently generated from the Gaussian distribution $\mathcal{N}(0,\frac{1}{m})$, and entries of $\bm{X}_0 \in \mathbb{R}^{m \times n}$ are independently generated from the Bernoulli-Gaussian distribution $BG(\theta)$ with $\theta \in [0,1]$, where Bernoulli-Gaussian distribution is defined as follows. 
\begin{defn}\label{def:BernGaussDict}
	A random variable $X$ is Bernoulli-Gaussian distributed $X\sim BG(\theta)$ with $\theta \in \left[0,1\right]$, if $X = W\cdot C$, where random variables $W$ and $C$ are independent, $W$ is Bernoulli distributed with parameter $\theta$, and $C\sim \mathcal{N}(0,1)$. 
\end{defn}

\begin{rem}
	The Gaussian distribution is not essential. It can be replaced by any continuous distribution.
\end{rem}

\begin{prop}[Number of Samples]\label{prop:NumOfSamples}
	Suppose that $\bm{Y} = \bm{D}_0 \bm{X}_0$ where $\bm{D}_0$ and $\bm{X}_0$ are generated according to the above probability model. 
	Given a constant $\epsilon\in (0,1)$, the $i$-th necessary condition in Proposition \ref{pro:NecessaryBd} holds with probability at least $1-\epsilon$, if $n \ge n_i$, where 
   \begin{align*}
   n_1 & = \frac{m-1}{1-\theta}\left[1-\frac{\ln\epsilon}{4m\left(m-1\right)\left(1-\theta\right)}\right.\\
   & \quad\left.+\sqrt{\left(1-\frac{\ln\epsilon}{4m\left(m-1\right)\left(1-\theta\right)}\right)^{2}-1}\right].
   \end{align*}
   \begin{align*}
   n_2 & =\frac{m-1}{1-\theta}\left[1-\frac{\ln\epsilon-\ln m}{4\left(m-1\right)\left(1-\theta\right)}\right.\\
   & \quad\left.+\sqrt{\left(1-\frac{\ln\epsilon-\ln m}{4\left(m-1\right)\left(1-\theta\right)}\right)^{2}-1}\right].
   \end{align*}
   and
   \begin{equation*}
   n_{3}=\frac{\ln\epsilon-\ln m-\ln\left(m-1\right)}{\ln\left(1-\theta\left(1-\theta\right)\right)}.
   \end{equation*}
	Furthermore, it holds that $n_1 \le n_2$. If $n \ge \max(n_2,n_3)$, then all three necessary conditions in Proposition \ref{pro:NecessaryBd} hold.
\end{prop}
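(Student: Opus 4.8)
The plan is to treat the three conditions separately, since each one reduces to a tail bound on a binomial count of zero/nonzero entries of $\bm{X}_0$, and the entries of $\bm{D}_0$ are irrelevant to all three combinatorial conditions. Under the model each entry of $\bm{X}_0$ is independently zero with probability $1-\theta$ and nonzero with probability $\theta$. I would dispose of condition \ref{cond:3} first, as it is the cleanest: for a fixed ordered pair $(i,i')$ with $i\neq i'$ and a fixed column $j$, the event $\{(\bm{X}_0)_{i,j}=0,\ (\bm{X}_0)_{i',j}\neq 0\}$ has probability $\theta(1-\theta)$ by independence, so the pair is ``not separated'' by any of the $n$ columns with probability $(1-\theta(1-\theta))^n$. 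A union bound over the $m(m-1)$ ordered pairs bounds the failure probability of condition \ref{cond:3} by $m(m-1)(1-\theta(1-\theta))^n$; setting this $\le\epsilon$ and taking logarithms yields $n_3$ directly, the negative denominator $\ln(1-\theta(1-\theta))$ turning the relation into a lower bound on $n$.

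For conditions \ref{cond:1} and \ref{cond:2} I would count zeros and invoke Hoeffding's inequality in the form $P(S-\mathbb{E}S\le -t)\le\exp(-2t^2/N)$ for a sum $S$ of $N$ independent Bernoulli variables. For condition \ref{cond:2} the number of zeros in row $i$ is $|\Omega_i^c|\sim\mathrm{Binomial}(n,1-\theta)$ with mean $n(1-\theta)$, so the requirement $|\Omega_i^c|\ge m-1$ fails with probability at most $\exp(-2(n(1-\theta)-(m-1))^2/n)$, and a union bound over the $m$ rows gives total failure probability $m\exp(-2(n(1-\theta)-(m-1))^2/n)$. For condition \ref{cond:1} I would rewrite $n\ge m+|\Omega|/m-1$ as a lower bound on the \emph{total} number of zeros, $mn-|\Omega|\ge m(m-1)$, where $mn-|\Omega|\sim\mathrm{Binomial}(mn,1-\theta)$ has mean $mn(1-\theta)$; Hoeffding then bounds the failure probability by $\exp(-2(mn(1-\theta)-m(m-1))^2/(mn))$, with no union bound needed since condition \ref{cond:1} is a single event.

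The technical core is inverting these bounds into explicit thresholds on $n$. In each case I would substitute $n(1-\theta)=(m-1)s$ so that, after setting the failure bound $\le\epsilon$ (respectively $\le\epsilon/m$ for the union-bounded condition \ref{cond:2}) and taking logarithms, the relation collapses to a quadratic $s^2-2cs+1\ge0$ with $c=1-\frac{\ln\epsilon}{4m(m-1)(1-\theta)}$ for condition \ref{cond:1} and $c=1-\frac{\ln\epsilon-\ln m}{4(m-1)(1-\theta)}$ for condition \ref{cond:2}. Since $\ln\epsilon<0$ forces $c>1$, the admissible branch is $s\ge c+\sqrt{c^2-1}$, and back-substituting $n=(m-1)s/(1-\theta)$ reproduces $n_1$ and $n_2$ exactly. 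I expect this quadratic-solving step, together with the need to use Hoeffding (rather than the multiplicative Chernoff bound) precisely so that the constant $4$ and the factor $(1-\theta)$ emerge as stated, to carry most of the bookkeeping; matching constants is the part where care is required.

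Finally, to prove $n_1\le n_2$ I would observe that both thresholds have the form $\frac{m-1}{1-\theta}\,g(a)$ with $g(a)=1-a+\sqrt{(1-a)^2-1}$, and that $g$ is decreasing in $a$ for $a<0$ (writing $u=1-a>1$, $g=u+\sqrt{u^2-1}$ is increasing in $u$). It then suffices to check the elementary inequality $a_1\ge a_2$, i.e.\ $\frac{\ln\epsilon}{m}\ge\ln\epsilon-\ln m$, which holds for all $m\ge2$ because the left side is negative but larger than the strictly more negative right side. Combining this monotonicity with the three individual guarantees yields the concluding claim: since $n_1\le n_2$, once $n\ge\max(n_2,n_3)$ all three necessary conditions are in force.
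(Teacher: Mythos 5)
Your proposal is correct and follows essentially the same route as the paper's proof in Appendix~A: Hoeffding's inequality plus union bounds over rows (for condition 2) and ordered pairs (for condition 3), followed by inverting the resulting quadratic in $n$ — your counting of zeros is just the complementary form of the paper's counting of nonzeros, and your substitution $n(1-\theta)=(m-1)s$ reproduces the same roots. The only material addition is your explicit monotonicity argument establishing $n_1\le n_2$, which the paper asserts without proof.
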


\begin{proof}
	See Appendix A.
\end{proof}

\begin{rem}
	We have the following observations.
	\begin{itemize}
		\item With fixed $\epsilon$ and $\theta$, $n_1$ and $n_2$ scale linearly with $m$ while $n_3$ is proportional to $\ln m$. 
		\item With fixed $m$ and  $\theta$, $n_1$, $n_2$, and $n_3$ increase proportionally to $-\ln \epsilon$. 
		\item With fixed $m$ and  $\epsilon$, when $\theta$ increases from 0 to 1, $n_1$ and $n_2$ increase, while $n_3$ first decreases and then increases. This matches the intuition that when $\theta$ is too small, we need more samples to have enough information to recover the dictionary. On the other hand, when $\theta$ is too large, more samples are needed to generate the orthogonal space of each $\bm{H}^{T}_{i,:}$. This is verified by simulations in Section \ref{sec:Simulation}. 
	\end{itemize}
\end{rem}

The bound $\max(n_2,n_3)$ provides a good estimate of the number of samples needed for unique recovery. By set theory, if event $A$ is a necessary condition for $B$, then $B$ implies $A$, or equivalently, $B \subseteq A$ and $\Pr(B)\le \Pr(A)$. In Proposition \ref{prop:NumOfSamples}, the quantity $1-\epsilon$ is a lower bound for $\Pr(A)$, where these necessary conditions hold. But unfortunately it is neither lower nor upper bound for $\Pr(B)$, where the dictionary can be uniquely recovered. Nevertheless, our simulations show that $\max(n_2,n_3)$ is a good approximation to the number of samples needed to recover the dictionary uniquely with probability more than $1-\epsilon$. 

In an asymptotic regime, the bounds can be simplified.

\begin{cor}[Asymptotic Bounds]\label{cor:AsymptBd}
	Consider the same settings as in Proposition \ref{prop:NumOfSamples}. For a given $\theta \in (0,1)$, let $m, n\rightarrow \infty$ with $\frac{n}{m} \rightarrow \bar{n} \in \mathbb{R}^{+}$. If $\bar{n}>\frac{1}{1-\theta}$, then all three necessary conditions in Proposition \ref{pro:NecessaryBd} holds with a probability arbitrary close to 1. 
\end{cor}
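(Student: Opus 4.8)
The plan is to derive the corollary from Proposition \ref{prop:NumOfSamples} by taking the limit of its three sample thresholds in the regime $m,n\to\infty$ with $n/m\to\bar n$. Recall that the proposition guarantees all three necessary conditions of Proposition \ref{pro:NecessaryBd} to hold with high probability (at least $1-3\epsilon$ after a union bound over the three events) whenever $n\ge\max(n_2,n_3)$. It therefore suffices to show that, for every fixed $\epsilon\in(0,1)$, the normalized threshold $\max(n_2,n_3)/m$ converges to $\frac{1}{1-\theta}$; the hypothesis $\bar n>\frac{1}{1-\theta}$ then forces $n>\max(n_2,n_3)$ for all large $m$.

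First I would analyze $n_1$ and $n_2$, both of which have the form $\frac{m-1}{1-\theta}\bigl[1-x+\sqrt{(1-x)^2-1}\bigr]$ with a correction term $x$ equal to $\frac{\ln\epsilon}{4m(m-1)(1-\theta)}$ and $\frac{\ln\epsilon-\ln m}{4(m-1)(1-\theta)}$ respectively. Because $\epsilon\in(0,1)$ makes both numerators negative, we have $x<0$, so the radicand $(1-x)^2-1 = 2|x|+|x|^2\ge 0$ and the square root is real; this sign check is the one step requiring care, since a naive $x\to0^{+}$ would render the root imaginary. Letting $m\to\infty$ with $\epsilon$ fixed, one has $x\to0$ in both cases, the bracket tends to $1+\sqrt{1-1}=1$, and hence $n_1/m\to\frac{1}{1-\theta}$ and $n_2/m\to\frac{1}{1-\theta}$. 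For $n_3=\frac{\ln\epsilon-\ln m-\ln(m-1)}{\ln(1-\theta(1-\theta))}$, the denominator is a fixed negative constant for $\theta\in(0,1)$ (since $\theta(1-\theta)\in(0,1/4]$) while the numerator grows like $-2\ln m$, so $n_3=O(\ln m)$ and $n_3/m\to0$. Combining the three limits gives $\max(n_2,n_3)/m\to\frac{1}{1-\theta}$.

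Finally, since $n/m\to\bar n>\frac{1}{1-\theta}=\lim_{m}\max(n_2,n_3)/m$, strict inequality of the limits yields an index $M$ (depending on $\epsilon$) with $n>\max(n_2,n_3)$ for all $m\ge M$, whence all three necessary conditions hold with probability at least $1-3\epsilon$. Thus $\liminf_{m} \Pr(\text{all three hold})\ge 1-3\epsilon$ for every $\epsilon\in(0,1)$, and letting $\epsilon\to0$ forces this probability to tend to $1$. The only genuine subtlety is the radicand sign check noted above; the remainder is routine asymptotic bookkeeping, in which the $\ln\epsilon$ terms are negligible against the leading $O(m)$ and $O(\ln m)$ growth.
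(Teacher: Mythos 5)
Your proof is correct and takes the same route the paper does: the paper's entire argument is the one-line remark that the corollary ``follows from elementary calculations and the fact that $\ln(m)/m \rightarrow 0$ when $m \rightarrow \infty$,'' and your asymptotic analysis showing $n_1/m,\,n_2/m \to \frac{1}{1-\theta}$ and $n_3/m \to 0$ is exactly those elementary calculations carried out in full. Your sign check on the radicand (noting $x<0$ so that $(1-x)^2-1\ge 0$) and the explicit union bound giving $1-3\epsilon$ are careful details the paper leaves implicit.
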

This corollary follows from elementary calculations and the fact that $\ln(m)/m \rightarrow 0$ when $m \rightarrow \infty$.

\section{Dictionary Update with Uncertainty \label{sec:Update-Uncertainty}}

While Section \ref{sec:ExactUpdate} studies the ideal case, this section focuses on the general case using the insight from Section \ref{sec:ExactUpdate}. In practice, there may be noise in the training samples $\bm{Y}$, and there may be errors in the assumed sparsity pattern. The exact equality in (\ref{eq:bilinear}) may not hold any longer. Following the idea in Section \ref{sec:ExactUpdate}, total least squares methods are applied to handle the uncertainties. The techniques for non-overcomplete and overcomplete dictionaries are developed in Sections \ref{sub:uncertainty-non-overcomplete} and \ref{sub:uncertainty-overcomplete-dictionary} respectively. 

\subsection{Non-overcomplete Dictionary Update \label{sub:uncertainty-non-overcomplete}}
In the case $m \ge l$, let $\bm{H}=\bm{D}^{\dagger}$ be the pseudo-inverse of $\bm{D}$ and assume that $\bm{H}\bm{D}=\bm{I}_{l \times l}$. Due to the uncertainty, Equation \eqref{eq:LeastSquare-Scaling} becomes approximate, that is,
\begin{equation}
\bm{H} [\bm{Y}, \bm{1}] \approx [\bm{X},\bm{1}] \;\; {\rm and} \;\; \mathcal{P}_{\Omega^c} ( \bm{X} )  \approx \bm{0}.
\label{eq:linear_approx}
\end{equation}

Total least squares is a technique to solve a least squares problem in the form $ \bm{A}\bm{X} \approx \bm{B}$ where errors in both observations $\bm{B}$ and regression models $\bm{A}$ are considered\cite{markovsky2007overview,Rhode2014A}. It targets at minimising the total errors via
\begin{equation}
\underset{\tilde{\bm{A}},\tilde{\bm{B}},\bm{X}}{\min} \; \Vert  [ \bm{A}-\tilde{\bm{A}}, \bm{B}-\tilde{\bm{B}} ] \Vert_{F}^{2}, \; {\rm subject\;to} \; \tilde{\bm{A}}\bm{X}=\tilde{\bm{B}}.
\label{eq:TLS-original}
\end{equation}
The constraint set above is nonconvex and hence \eqref{eq:TLS-original} is a nonconvex optimisation problem. Nevertheless, its global optimal solution can be obtained by using the singular value decomposition (SVD). Set $\bm{Z}=[ \bm{X}^T, -\bm{I} ]^T$. Observe that the constraint in \eqref{eq:TLS-original} implies that $[ \tilde{\bm{A}}, \tilde{\bm{B}} ] \bm{Z} = \bm{0}$. The optimal $\bm{Z}$ can be obtained from the smallest right singular vectors of the matrix $[ \bm{A},\bm{B} ]$, and the optimal $[ \tilde{\bm{A}}, \tilde{\bm{B}} ]$ is a best lower-rank approximation of the matrix $[ \bm{A},\bm{B} ]$. 

The difficulty in applying total least squares directly is due to the additional constraint $\mathcal{P}_{\Omega^c} ( \mathbf{X} ) \approx \mathbf{0}$ in \eqref{eq:linear_approx}. Below we present three possible solutions, where the last one IterTLS excels and is adopted. 


\subsubsection{Structured Total Least Squares (STLS) \label{subsub:STLS}}

Consider having uncertainties in both $\bm{Y}$ and the sparsity pattern. Based on \eqref{eq:linear_approx}, a straightforward total least squares formulation is 
\begin{align} 
\min_{\tilde{\bm{Y}},\tilde{\bm{X}},\bm{H}}\; & \frac{1}{2} \Vert \bm{Y}-\tilde{\bm{Y}} \Vert_{F}^{2} + \frac{1}{2} \Vert \mathcal{P}_{\Omega^{c}}\left(\tilde{\bm{X}}\right) \Vert_{2}^{2} \label{eq:sTLS-original}, \\
{\rm s.t.}\; & \bm{H} [\tilde{\bm{Y}},\bm{1} ]= [\tilde{\bm{X}},\bm{1} ]. \nonumber
\end{align}
To solve the above nonconvex optimisation problem, we follow the approach in \cite{lemmerling2003efficient}. It involves an iterative process where each iteration solves an approximated quadratic optimisation problem which admits a closed-form optimal solution. 

At each iteration, denote the initial estimate of $(\tilde{\bm{Y}},\tilde{\bm{X}},\bm{H} )$ by $(\hat{\bm{Y}}, \hat{\bm{X}}, \hat{\bm{H}})$. Note that the constraint set in \eqref{eq:sTLS-original} can be written as $\mathcal{L} (\tilde{\bm{Y}},\tilde{\bm{X}},\bm{H} ) = \bm{0}$ where  
\[
\mathcal{L} (\tilde{\bm{Y}},\tilde{\bm{X}},\bm{H} ) :=\bm{H} [\tilde{\bm{Y}},\bm{1}] - [\tilde{\bm{X}},\bm{1}].
\]
We consider the first order Taylor approximation of $\mathcal{L} (\tilde{\bm{Y}},\tilde{\bm{X}},\bm{H} )$ at given $(\hat{\bm{Y}}, \hat{\bm{X}}, \hat{\bm{H}})$, which reads 
\[
\mathcal{L} ( \tilde{\bm{Y}}, \tilde{\bm{X}}, \bm{H} ) = \mathcal{L} ( \hat{\bm{Y}}, \hat{\bm{X}}, \hat{\bm{H}} ) + \bm{J} ( \bm{z}-\hat{\bm{z}} ),
\]
where $\bm{z}:=[{\rm vect}(\tilde{\bm{Y}})^{T}, {\rm vect}(\tilde{\bm{X}})^{T}, {\rm vect}(\bm{H})^{T}]^{T}$,
$\hat{\bm{z}}:=[ {\rm vect}(\hat{\bm{Y}})^{T}, {\rm vect}(\hat{\bm{X}})^{T}, {\rm vect}(\hat{\bm{H}})^{T}]^{T}$,
and $\bm{J}$ is the corresponding Jacobian matrix. With this approximation, the nonconvex optimisation problem in \eqref{eq:sTLS-original} becomes a quadratic optimisation problem with equality constraints 
\begin{align}
\underset{\tilde{\bm{Y}}, \tilde{\bm{X}}, \bm{H}}{\min} \; & \frac{1}{2} \Vert \bm{Y}-\tilde{\bm{Y}}\Vert_{F}^{2} + \frac{1}{2}\Vert \mathcal{P}_{\Omega^{c}} (\tilde{\bm{X}})\Vert_{2}^{2} \label{eq:STLS-approx},\\
{\rm s.t.}\; & \mathcal{L} ( \hat{\bm{Y}}, \hat{\bm{X}}, \hat{\bm{H}} ) + \bm{J} ( \bm{z} - \hat{\bm{z}} ) = \bm{0}. \nonumber 
\end{align}
This is a quadratic optimisation problem with linear equality constraints, and admits a closed-form solution by a direct application of KKT conditions \cite{Fletcher2013}.

The STLS approach has two issues. The first issue is its very high computational cost. The quadratic optimisation problem \eqref{eq:STLS-approx} involves $mn+ln+lm$ unknowns and $l(n+1)$ equation constraints. Its closed-form solution involves a matrix of size $(n(m+2l)+l(m+1))\times (n(m+2l)+l(m+1))$. We have obtained the closed-form of the Jacobian matrix $\bm{J}$, implemented a conjugate gradient algorithm to use the structures in \eqref{eq:STLS-approx} for a speed-up (details are omitted here). However, simulations in Section \ref{sub:TestTLS} show that the computation speed is still too slow for practical problems. The second issue is the inferior performance compared to other TLS methods in Sections \ref{subsub:ParTLS} and \ref{subsub:ProjectedTLS}. This is because Taylor approximation of the constraint is used in STLS, while other TLS methods below incorporate the constraints directly without Taylor approximation.    

\subsubsection{Parallel Total Least Squares (ParTLS)
\label{subsub:ParTLS}}
The key idea of ParTLS is to decouple the problem \eqref{eq:linear_approx} into $l$ sub-problems that can be solved in parallel: 
\begin{equation*}
\bm{H}_{i,:} [\bm{Y},\bm{1}]\approx [\bm{X}_{i,:},1]
\;{\rm and} \; \bm{X}_{i,\Omega_i^c} \approx \bm{0}, \forall i\in [l].
\end{equation*}
It is straightforward to verify that this is equivalent to 
\begin{equation} \label{eq:decomposition}
\underbrace{\left[\begin{array}{cc} 
	\bm{Y}^{T} & -\bm{P}_{\Omega_{i}}\\
	\bm{1} & \bm{0}
	\end{array}\right]}_{\bm{A}_{i}} \underbrace{\left[\begin{array}{c}
	\bm{H}_{i,:}^{T}\\
	\bm{X}_{i,\Omega_{i}}^{T}
	\end{array}\right]}_{\bm{z}_{i}} \approx \underbrace{\left[\begin{array}{c}
	\bm{0}\\
	1
	\end{array}\right]}_{\bm{b}} , \; 
    \forall i\in\left[l\right],
\end{equation}
where $\bm{P}_{\Omega_i}\in \mathbb{R}^{n \times |\Omega_i|}$ is the projection matrix obtained by keeping the columns of the identity matrix indexed by $\Omega_i$ and removing all other columns. 

Sub-problems (\ref{eq:decomposition}) can be solved by directly applying the TLS formulation \eqref{eq:TLS-original}. Note that $[\tilde{\bm{A}}_{i}, -\tilde{\bm{b}}] [\bm{z}_{i}^{T},1]^{T} = \bm{0}$. The vector $[\bm{z}_{i}^{T},1]^T$ can be computed as a scaled version of the least right singular vector of the matrix $[\bm{A}_{i}, -\bm{b}]$. Then $\bm{H}_{i,:}$ and $\bm{X}_{i,\Omega_i}$ can be obtained from $\bm{z}_i$.

ParTLS enjoys the following advantages. 1) Its global optimality is guaranteed for the ideal case of no data noise or sparsity pattern errors. It is straightforward to see that in the ideal case the ParTLS solutions satisfy \eqref{eq:LeastSquare-Scaling}. 2) It is computationally efficient. The sub-problems (\ref{eq:decomposition}) are of small size and can be solved in parallel. However, ParTLS also has its own issue --- certain structures in the problem are not enforced. For different sub-problem $i\in [l]$, the `denoised' $\bm{Y}$, denoted by $\tilde{\bm{Y}}_i$ can be different.

\subsubsection{Iterative Total Least Squares (IterTLS) \label{subsub:ProjectedTLS}}

IterTLS is an iterative algorithm such that in each iteration a total least squares problem is formulated based on the estimate from the previous iteration.  
It starts with an initial estimate obtained by solving the ideal case equation \eqref{eq:LeastSquare-Scaling}.
In each iteration, let $\hat{\bm{X}}$ be an estimate of $\bm{X}$ from either initialisation or the previous iteration. We formulate the following total least squares problem 
\begin{equation}\label{eq:suboptimalTLS}
\min_{\tilde{\bm{Y}},\tilde{\bm{X}},\bm{H}} \; \left\Vert [\bm{Y}^{T}-\tilde{\bm{Y}}^{T} , \hat{\bm{X}}^{T}-\tilde{\bm{X}}^{T} ] \right\Vert _{F}^{2} \; {\rm s.t.} \; \tilde{\bm{Y}}^{T} \bm{H}^{T} = \tilde{\bm{X}}^{T}, 
\end{equation}
which has the identical form as \eqref{eq:TLS-original}.
Note that the constraint $\mathcal{P}_{\Omega^c}(\bm{X}) \approx \bm{0}$ in \eqref{eq:linear_approx} is implicitly imposed as $\mathcal{P}_{\Omega^c}(\hat{\bm{X}})=\bm{0}$.
The problem \eqref{eq:suboptimalTLS} can be optimally solved by using the SVD
\begin{equation*}
\left[\bm{Y}^{T},\hat{\bm{X}}^{T}\right] = \left[\bm{U}_{Y},\bm{U}_{X}\right] \left[\begin{array}{cc}
\bm{\Sigma}_{Y} & \bm{0}\\
\bm{0} & \bm{\Sigma}_{X}
\end{array}\right] \left[\begin{array}{cc}
\bm{V}_{YY} & \bm{V}_{YX}\\
\bm{V}_{XY} & \bm{V}_{XX}
\end{array}\right]^{T}.
\end{equation*}
The optimal solution is given by $\tilde{\bm{Y}}^{T}  = \bm{U}_{Y} \bm{\Sigma}_{Y} \bm{V}_{YY}^{T}$ and $\tilde{\bm{X}}^{T} = \bm{U}_{Y} \bm{\Sigma}_{Y} \bm{V}_{XY}^{T}$.
To prepare the next iteration, one obtains an updated estimate $\hat{\bm{X}}^{\prime}$ by applying a simple projection operator to $\tilde{\bm{X}}$:  $\mathcal{P}_{\Omega}(\hat{\bm{X}}^{\prime})=\mathcal{P}_{\Omega}(\tilde{\bm{X}})$ and $\mathcal{P}_{\Omega^{c}}(\hat{\bm{X}}^{\prime})=\bm{0}$. With this new estimate $\hat{\bm{X}}^{\prime}$, one can proceed with the next iteration until convergence.


\subsection{Update Overcomplete Dictionary \label{sub:uncertainty-overcomplete-dictionary}}
The difficulty of overcomplete dictionary update comes from the fact that for an overcomplete dictionary $\bm{D}_0$, $\bm{D}_0^{\dagger} \bm{D}_0 \ne \bm{I}$ in general, where $\bm{D}_0^{\dagger}$ is the pseudo-inverse of $\bm{D}_{0}$. Therefore, the above least squares or total least squares approaches cannot be directly applied. 

To address this issue, a straightforward approach is to divide the whole dictionary into a set of sub-dictionaries each of which is either complete or undercomplete, and then update these sub-dictionaries one-by-one whilst fixing all other sub-dictionaries and the corresponding coefficients. More explicitly, given estimated $\bm{D}$ and $\bm{X}$, consider updating $\bm{D}_{:,\mathcal{T}}$,
the submatrix of $\bm{D}$ consisting of columns indexed by
$\mathcal{T}$,
and $\bm{X}_{\mathcal{T,:}}$, 	the submatrix of $\bm{X}$ consisting of rows indexed by $\mathcal{T}$. Then, consider the residual matrix
\begin{equation}
\bm{Y}_r = \bm{Y}-\bm{D}_{:,\mathcal{T}^c} \bm{X}_{\mathcal{T}^c,:},
\end{equation}
and apply the method in Section \ref{subsub:ProjectedTLS} to solve the problem $\bm{Y}_r \approx \bm{D}_{:,\mathcal{T}} \bm{X}_{\mathcal{T},:}$. Then repeat this step for all sub-dictionaries. 
As the dictionary is updated block by block, 
we refer to our algorithm as BLOck Total LEast SquareS (BLOTLESS).

%
%
%

\section{Numerical Test \label{sec:Simulation}}
Parts of the numerical tests are based on synthetic data. The training samples $\bm{Y}=\bm{D}_0 \bm{X}_0$ are generated according to the probability model specified in Section \ref{subsub:NumberOfSamples}. When the dictionary recovery is not  exact, the performance criterion is the difference between the ground-truth dictionary $\bm{D}_0$ and the estimated dictionary $\hat{\bm{D}}$.  In particular, the estimation error is defined as
\begin{equation}
R_{\rm err} = \frac{1}{l}\sum_{p=1}^{l} \left(1-| \bm{\hat{d}}_{p}^{T} \bm{d}_{0_{j_{p}}} |\right),
\label{eq:Dictionary_error}
\end{equation}
where $\bm{\hat{d}}_{p}$ is the $p$-th item in the estimated dictionary, $\bm{d}_{0_{j_{p}}}$ is the $j_p$-th item in the ground-truth dictionary, $j_p = \arg \max_{j\in \mathcal{J}_p} \{|\bm{\hat{d}}_{p}^{T}\bm{d}_{0_j}|\}$ and $\mathcal{J}_p = [l] \setminus \{j_1,j_2,\cdots,j_{p-1}\}$. The items in both dictionaries are normalised to  have unit $\ell_2$-norm. 

Numerical tests based on real data are presented in Section \ref{subsub:RealData} for image denoising. The performance metric is the Peak Signal-to-Noise-Ratio (PSNR) of the denoised images. 

\subsection{Simulations for Exact Dictionary Recovery}
In this section we evaluate numerically bounds in Proposition \ref{prop:NumOfSamples}. All the results presented here are based on 100 random and independent trials. For theoretical performance prediction, we compute $n_2$, $n_3$, and $\max(n_2,n_3)$ using $\epsilon=0.01$. In the numerical simulations, we vary $n$ and find its critical value $n_{\rm sim}$ under which exact recovery happens with an empirical probability at most 99\% and above which exact recovery happens with an empirical probability at least 99\%. 

\begin{figure}[h]
\centering
\subfigure[$\theta=0.1$, $\max(n_2,n_3)=121$]{
	\label{fig:probability:4} 
	\includegraphics[width=0.23\textwidth]{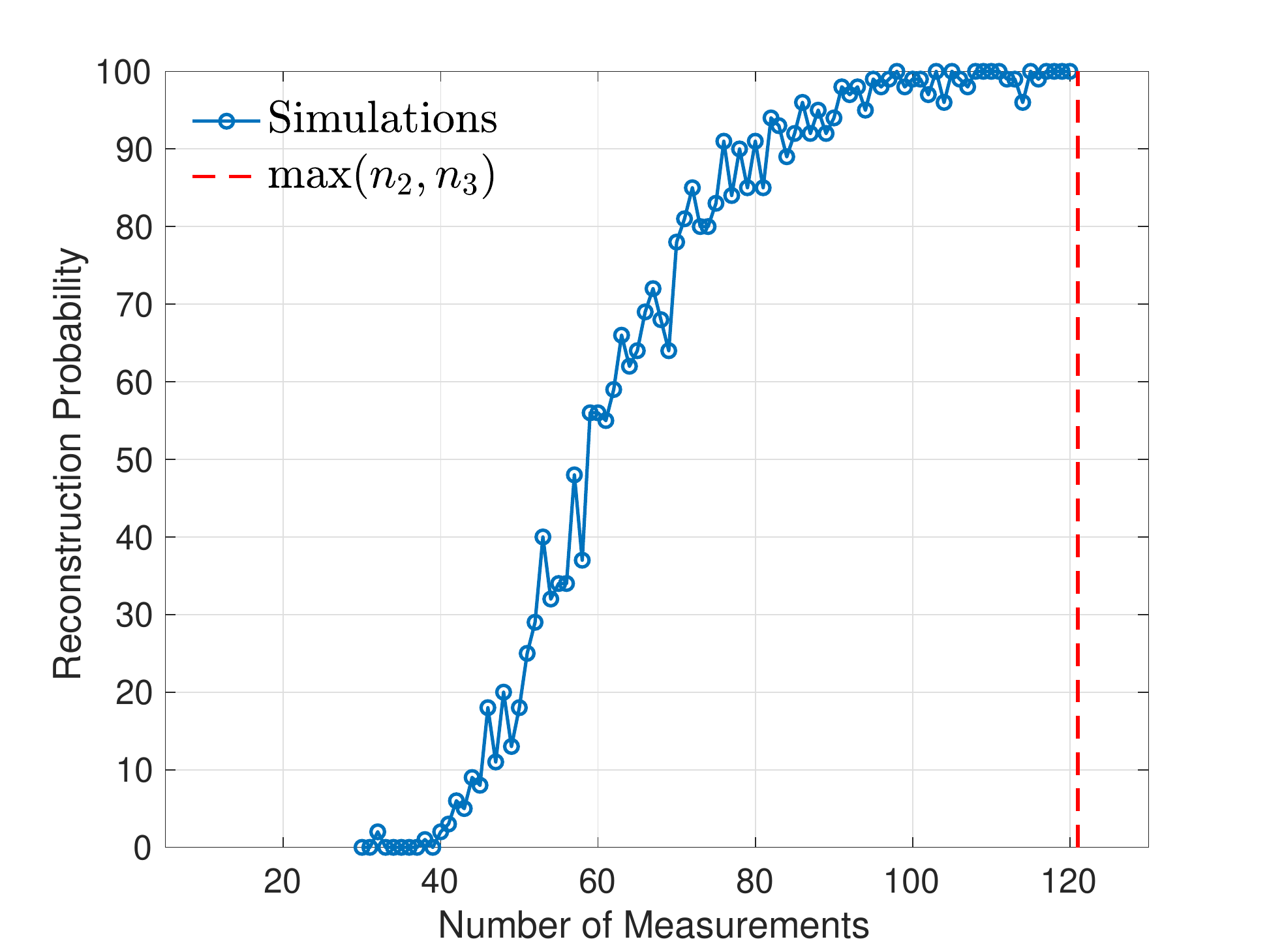}}		
\subfigure[$\theta=0.2$, $\max(n_2,n_3)=65$]{
	\label{fig:probability:6} 
	\includegraphics[width=0.23\textwidth]{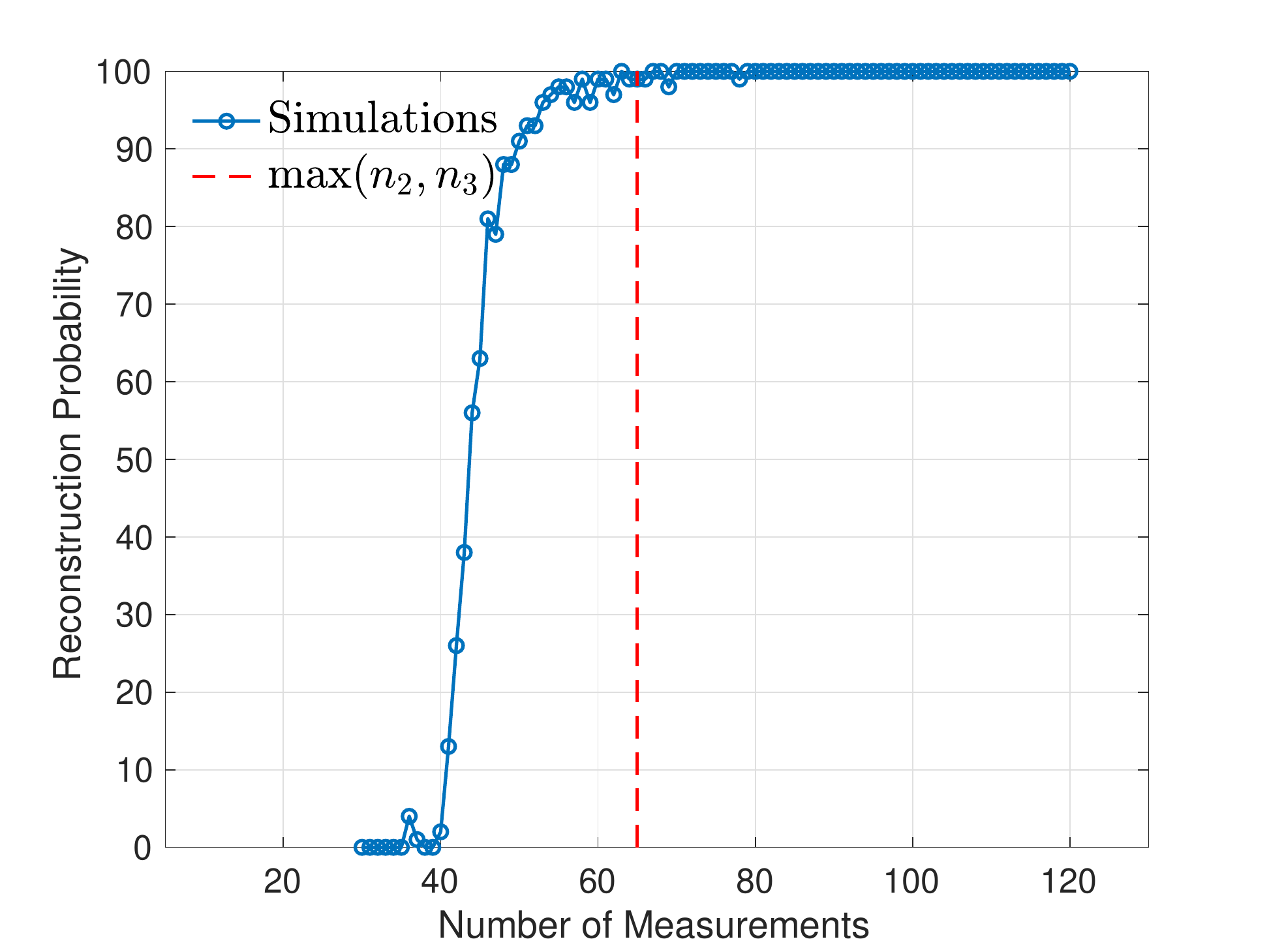}}		
\subfigure[$\theta=0.3$, $\max(n_2,n_3)=64$]{
	\label{fig:probability:10} 
	\includegraphics[width=0.23\textwidth]{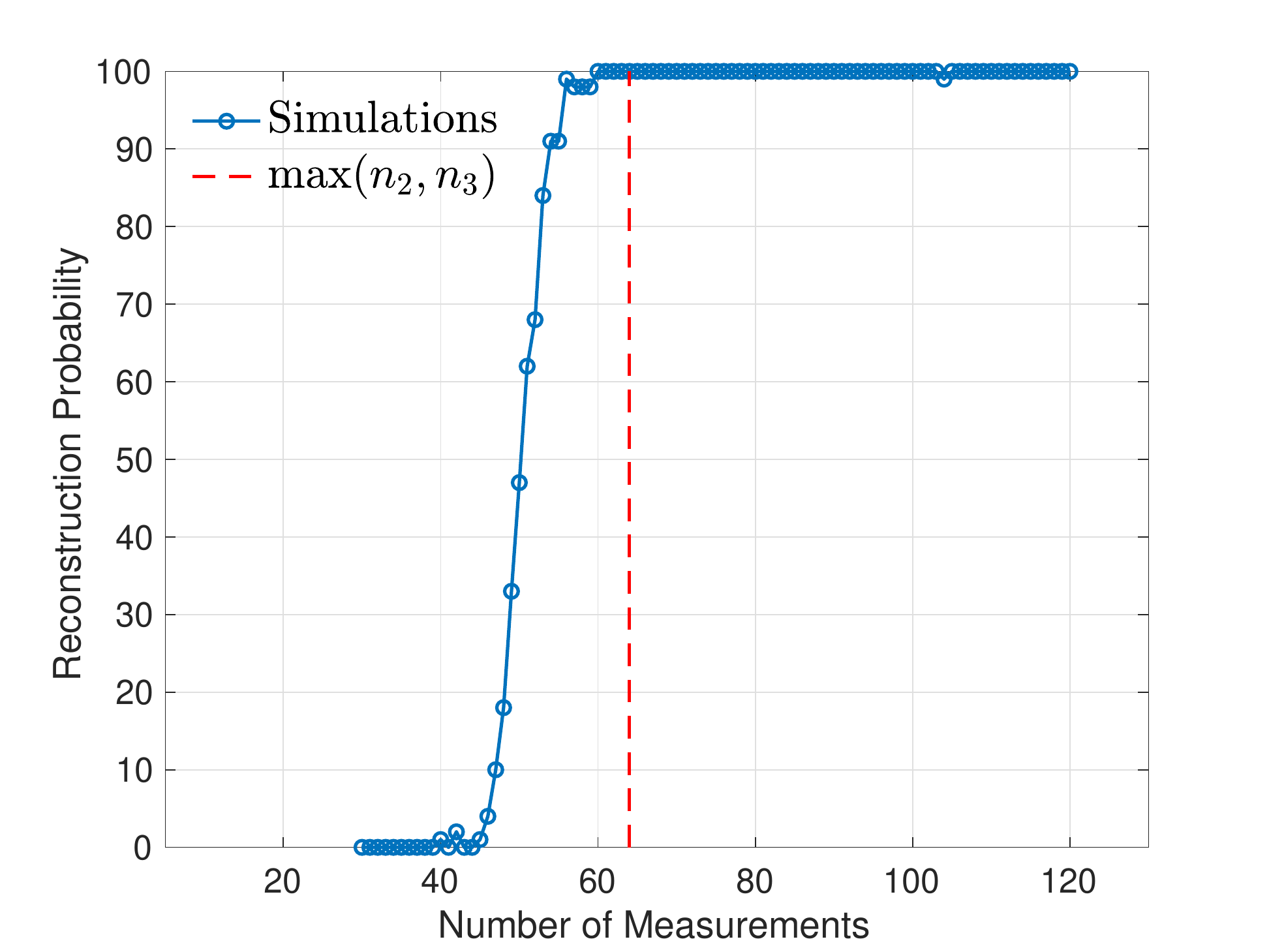}}		 	
\subfigure[$\theta=0.4$, $\max(n_2,n_3)=78$]{
	\label{fig:probability:12} 
	\includegraphics[width=0.23\textwidth]{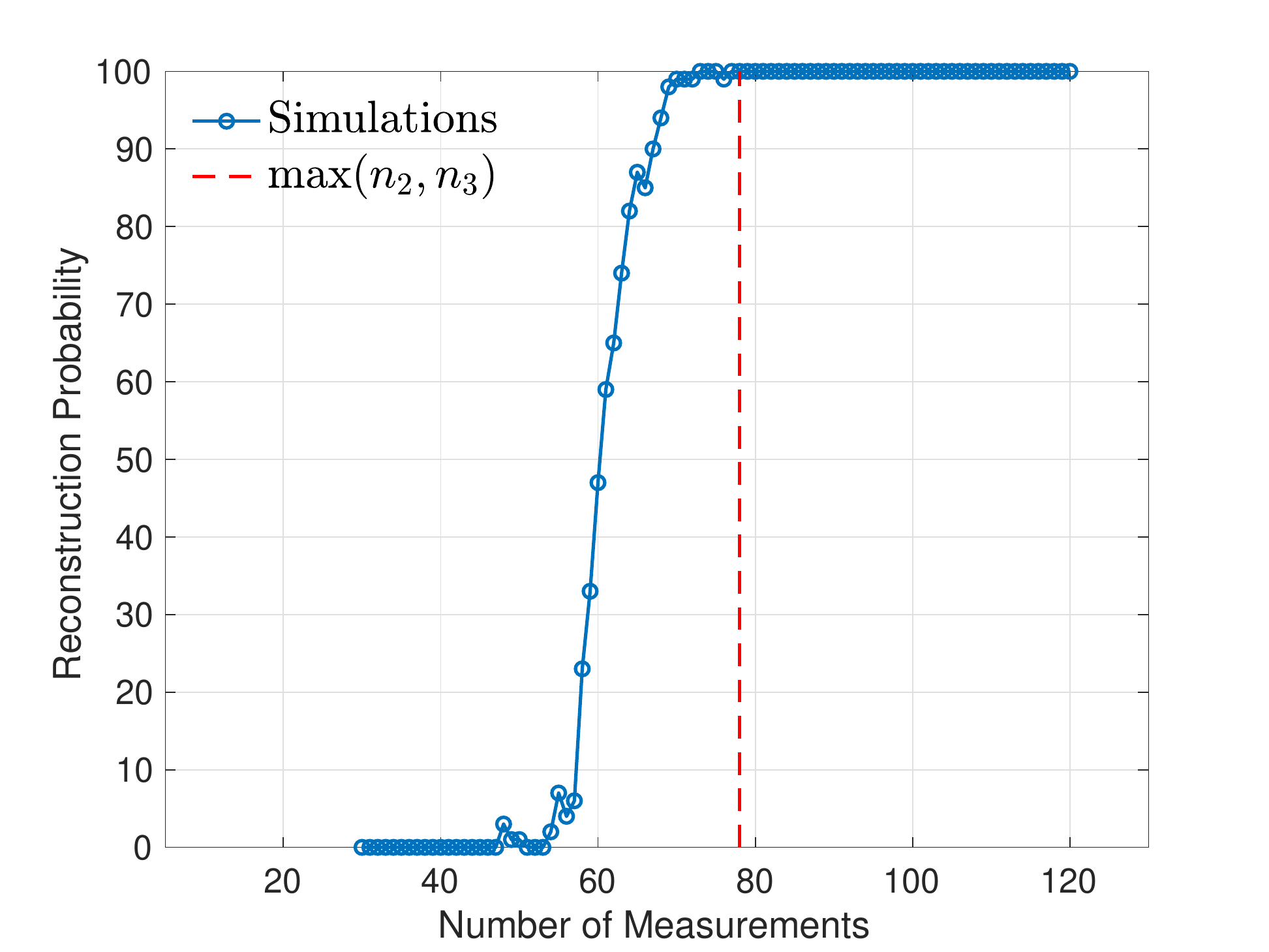}}	
\caption{Probability of exact recovery against the number of training samples for $m=30$. }
\label{fig:probability}		
\end{figure}

\begin{figure}[h]
	\centering
	\subfigure[$m=15$]{
		\label{fig:sim:1} 
		\includegraphics[width=0.23\textwidth]{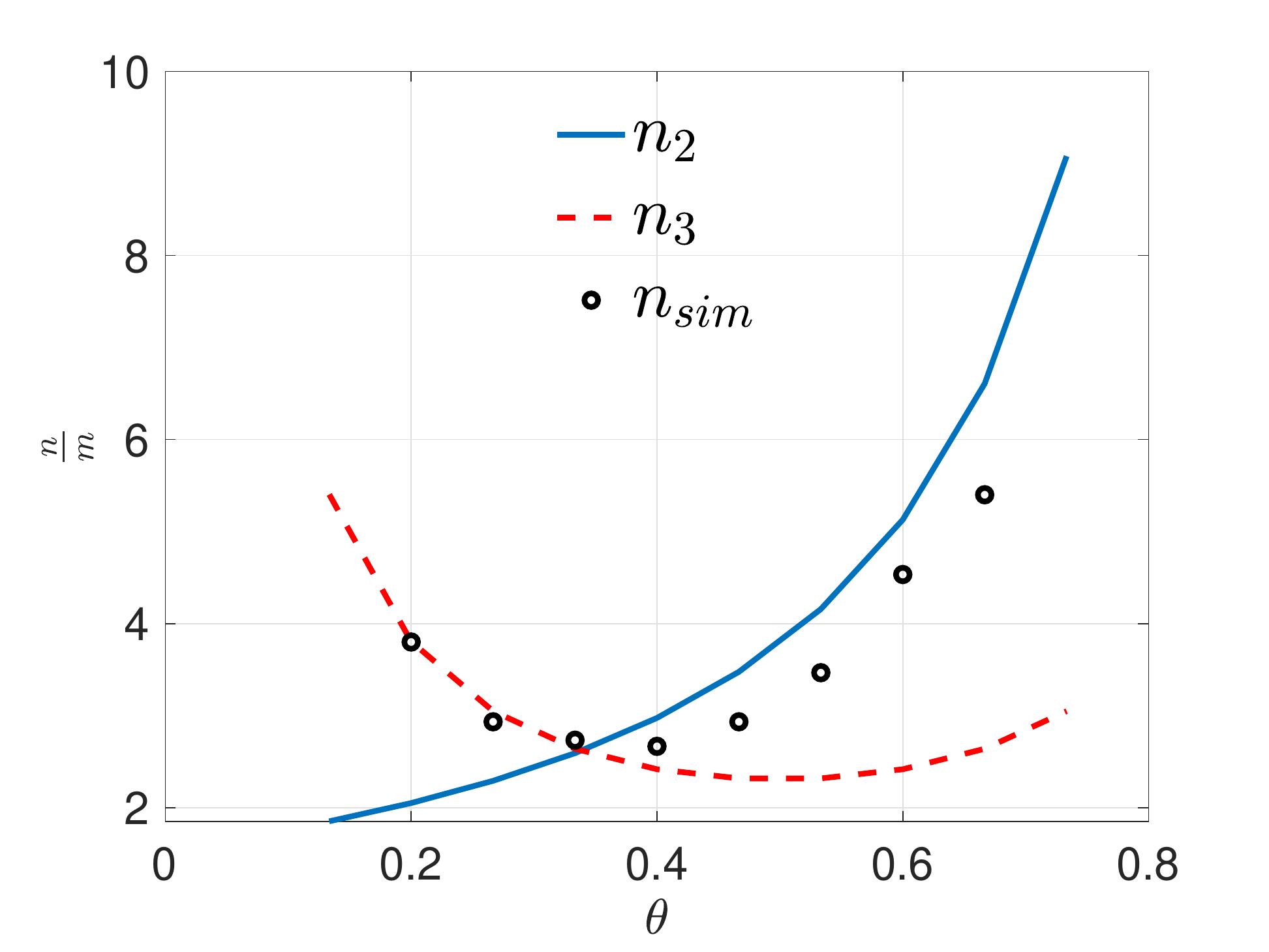}}
	\centering
	\subfigure[$m=20$]{
		\label{fig:sim:2} 
		\includegraphics[width=0.23\textwidth]{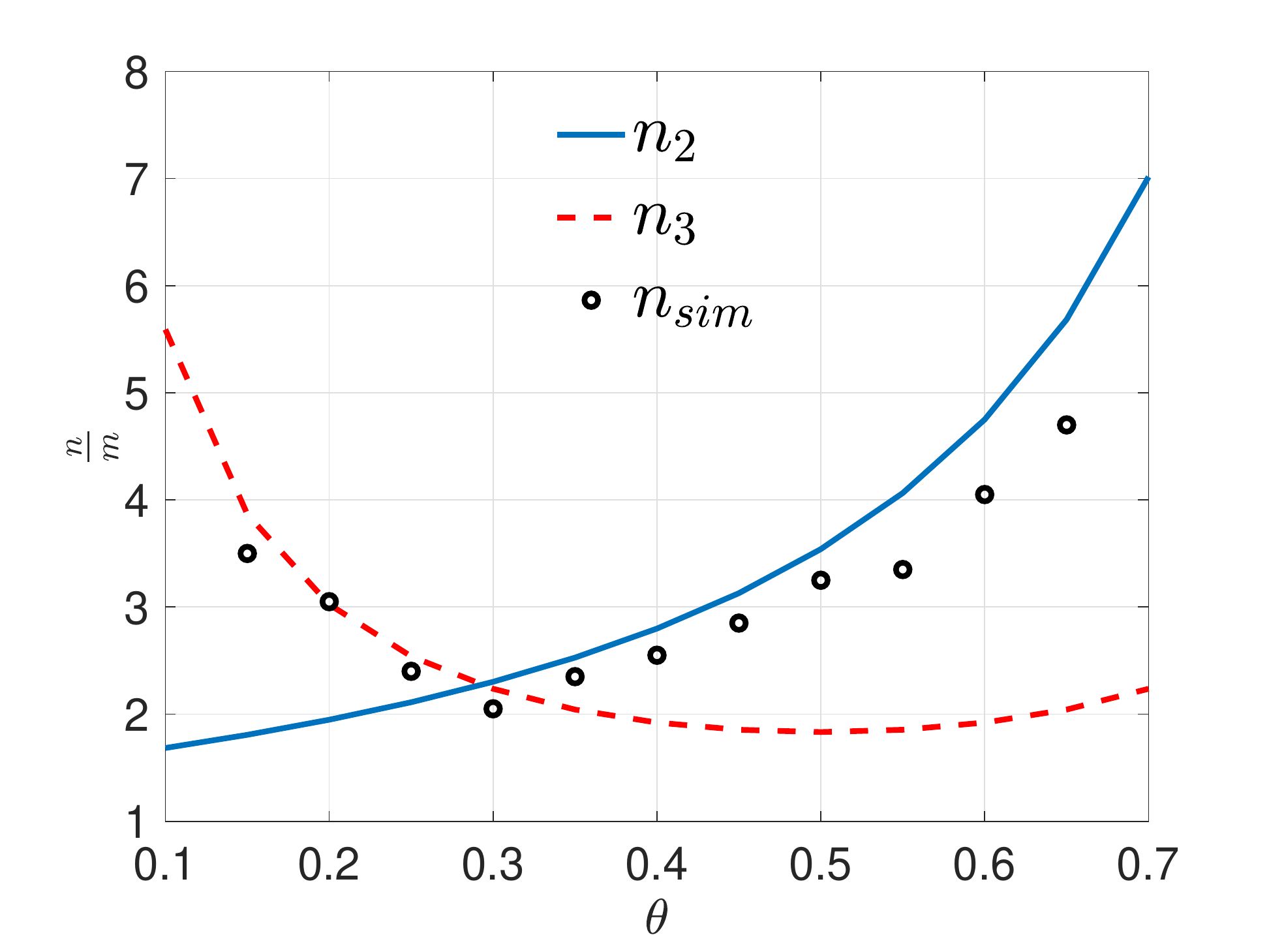}}
	\subfigure[$m=25$]{
		\label{fig:sim:3} 
		\includegraphics[width=0.23\textwidth]{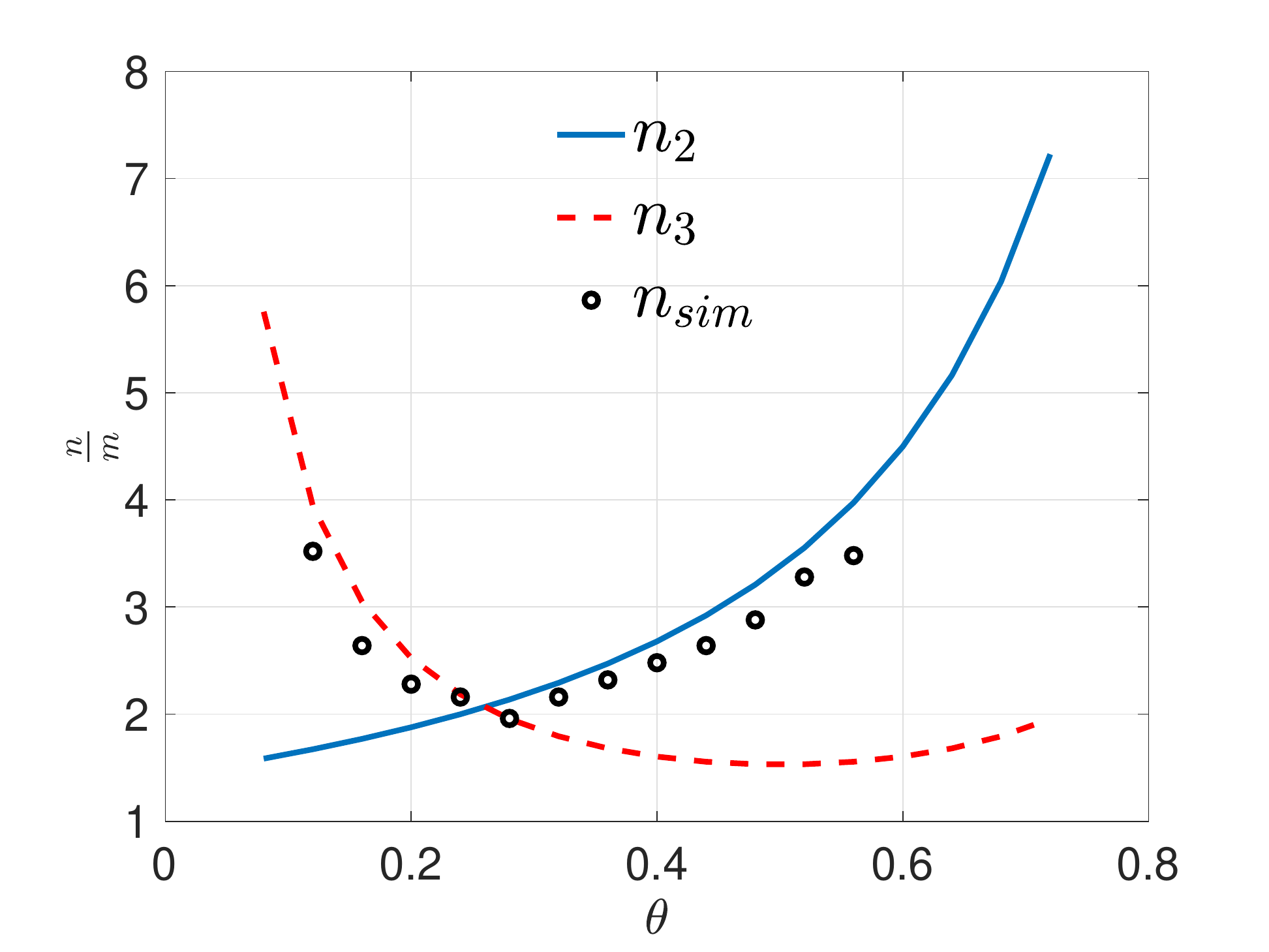}}
	\subfigure[$m=30$]{
		\label{fig:sim:4} 
		\includegraphics[width=0.23\textwidth]{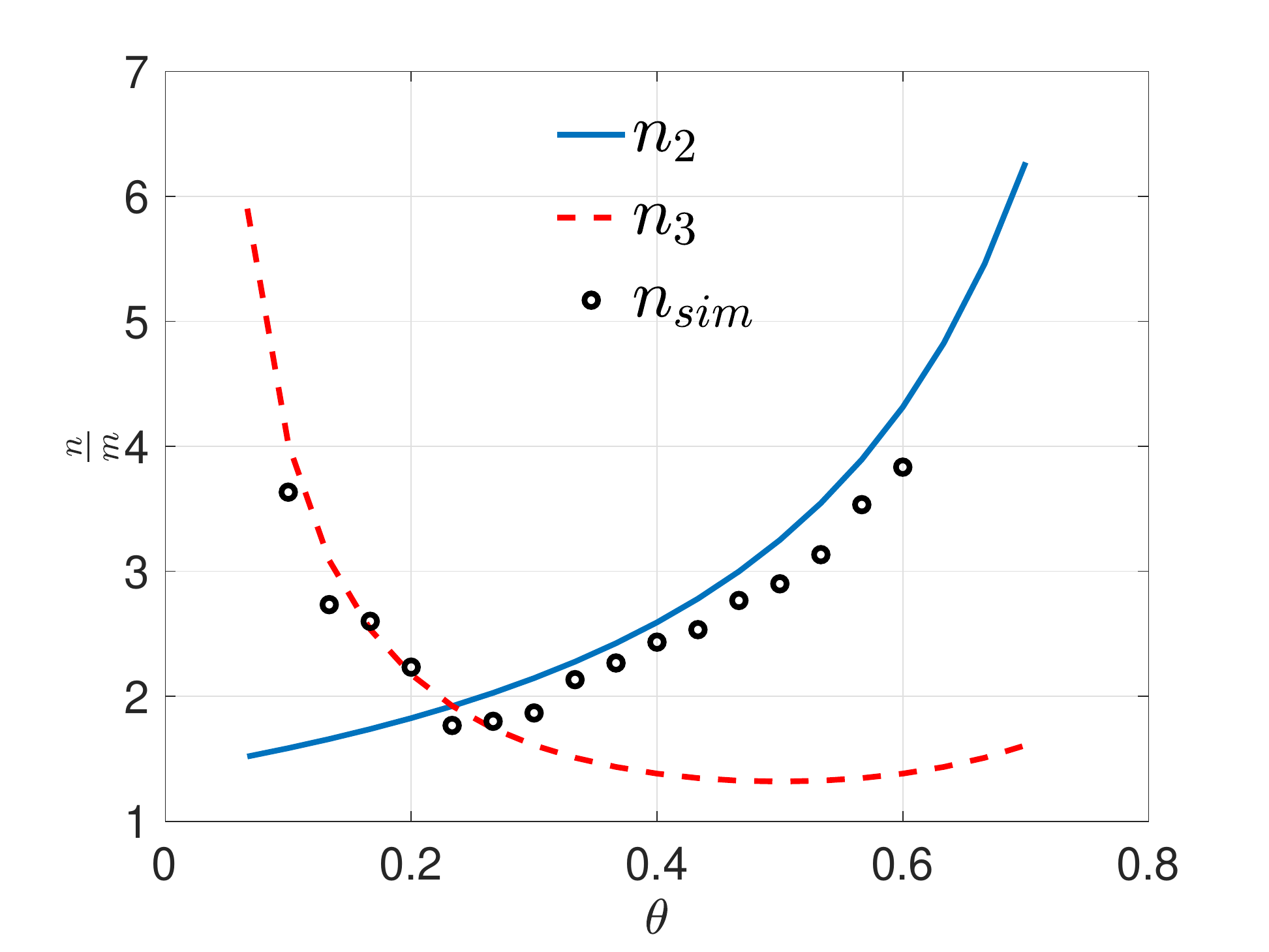}}
	\caption{Normalised number of training samples $n/m$ for at least 99\% probability of exact recovery versus $m$. }
	\label{v_shape}
\end{figure}

We start with the relation between the number of training samples $n$ and the sparsity ratio $\theta$ for a given $m$. In Figure \ref{fig:probability}, we fix $m=30$, vary $\theta$, and study the probability of exact recovery against the number of training samples. Results in Figure \ref{fig:probability} show that the theoretical prediction $\max(n_2,n_3)$ is quite close to the critical $n_{\rm{sim}}$ obtained by simulations. One can also observe that the needed number of training samples first decreases and then increases as $\theta$ increases, which is also predicted by the theoretical bounds. A larger scale numerical test is done in Figure \ref{v_shape}, where four sub-figures correspond to four different values of $m$. Once again, simulations demonstrate these bounds in Proposition \ref{prop:NumOfSamples} match the simulations very well.

\begin{figure}[h]
	\centering
	\includegraphics[width=0.36\textwidth]{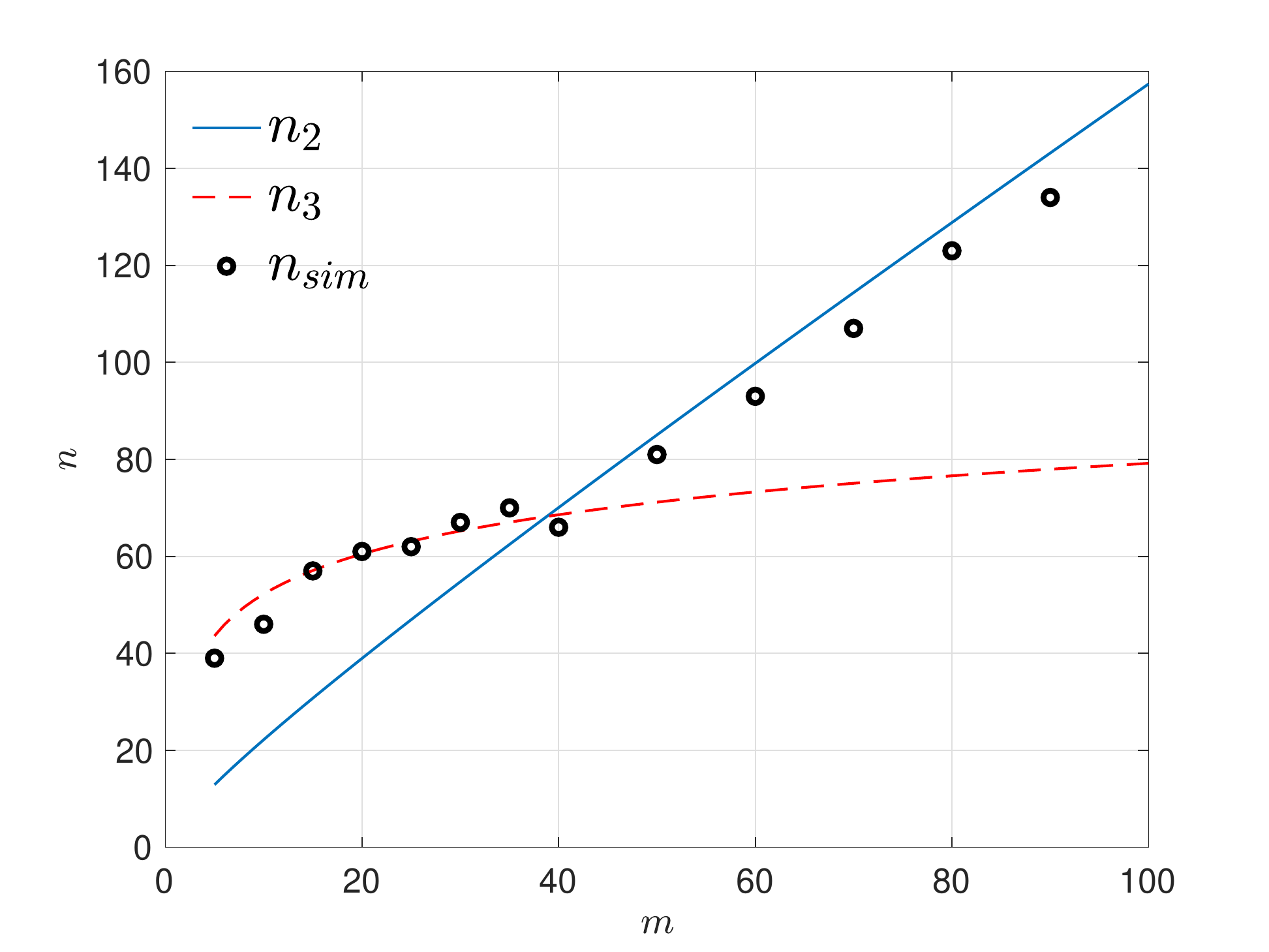}
	\caption{Required $n$ for exact recovery versus $m$ with a given $\theta = 0.2$.}
	\label{sampling_order}
\end{figure}

Let us consider the required $n$ for exact recovery as a function of $m$ by fixing $\theta = 0.2$. Simulation results are depicted in Figure \ref{sampling_order}. When $n_3 \ge n_2$, $n_{\rm sim}$ behaves as $\ln m$. Otherwise, $n_{\rm sim}$ behaves as $m$. This is consistent with Proposition \ref{prop:NumOfSamples}.

\begin{figure}[h]
	\centering
	\includegraphics[width=0.4\textwidth]{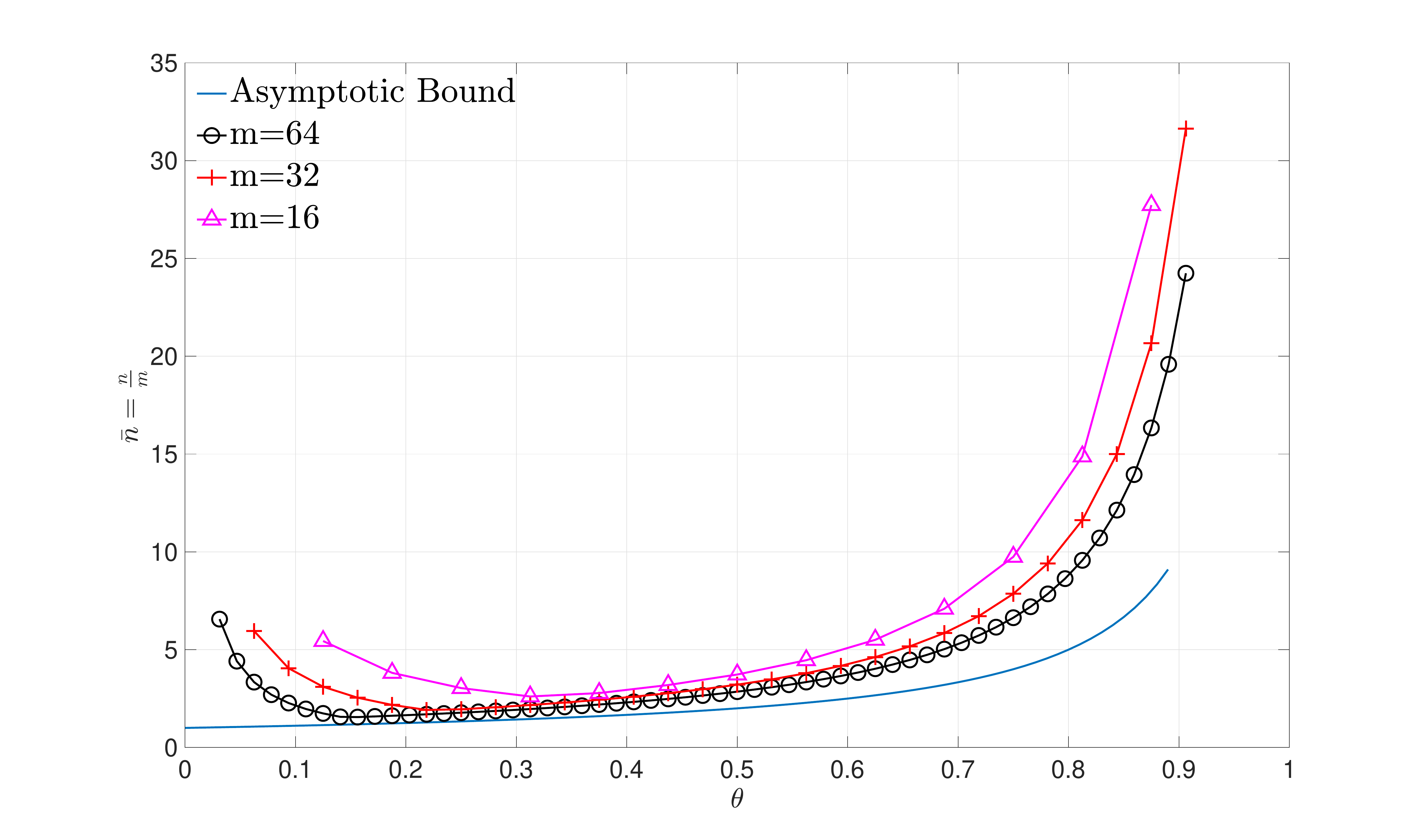}
	\caption{Asymptotic bounds in Corollary \ref{cor:AsymptBd} compared with empirical results for finite $m$. }
	\label{t_bound}
\end{figure}


Finally, we numerically study the accuracy of the asymptotic results in Corollary \ref{cor:AsymptBd}. We draw normalised number of training samples $n/m$ for exact recovery as a function of sparsity ratio $\theta$ for different values of $m$, including $m=16,\;32,\;64$. Simulation results in Figure \ref{t_bound} show a trend that is consistent with the asymptotic results in Corollary \ref{cor:AsymptBd}.

\subsection{Total Least Squares Methods \label{sub:TestTLS}}
The three total least squares methods introduced in Section \ref{sec:Update-Uncertainty} are compared, henceforth denoted by BLOTLESS-STLS, BLOTLESS-ParTLS and BLOTLESS-IterTLS respectively, by being embedded in the dictionary learning process. Random dictionaries are used as the initial starting point of dictionary learning. OMP is used for sparse coding. 




\begin{figure}[h]
	\centering
	\subfigure[$m=l=64$, $\theta=5/64$, $n=400$]{
		\label{fig:complete-d}
		\includegraphics[width=0.23\textwidth]{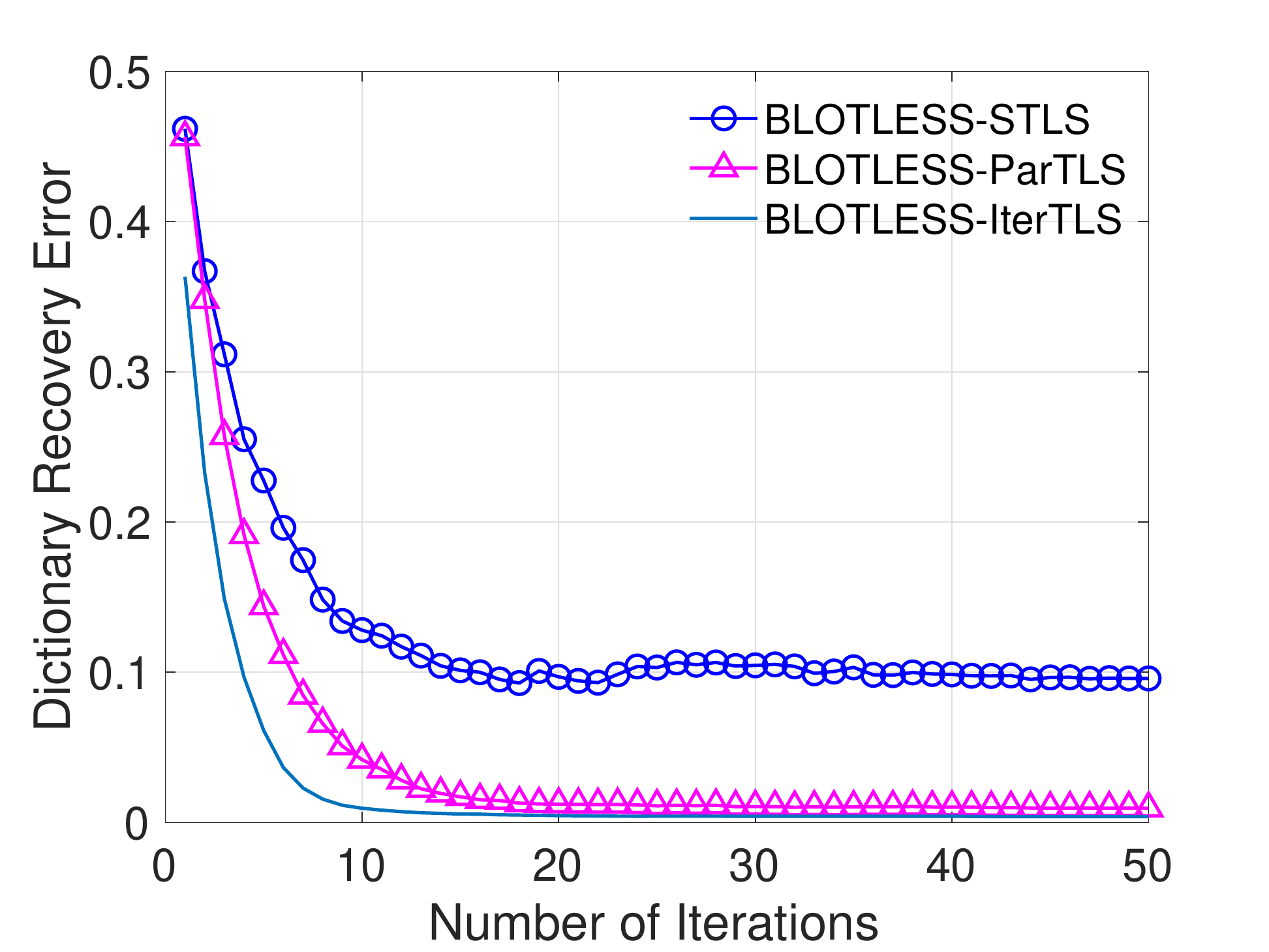}}	
	\hspace{0.025cm}	
	\subfigure[$m=64$, $l=128$, $\theta=5/128$, $n=800$]{
		\label{fig:overcomplete-d} 
		\includegraphics[width=0.23\textwidth]{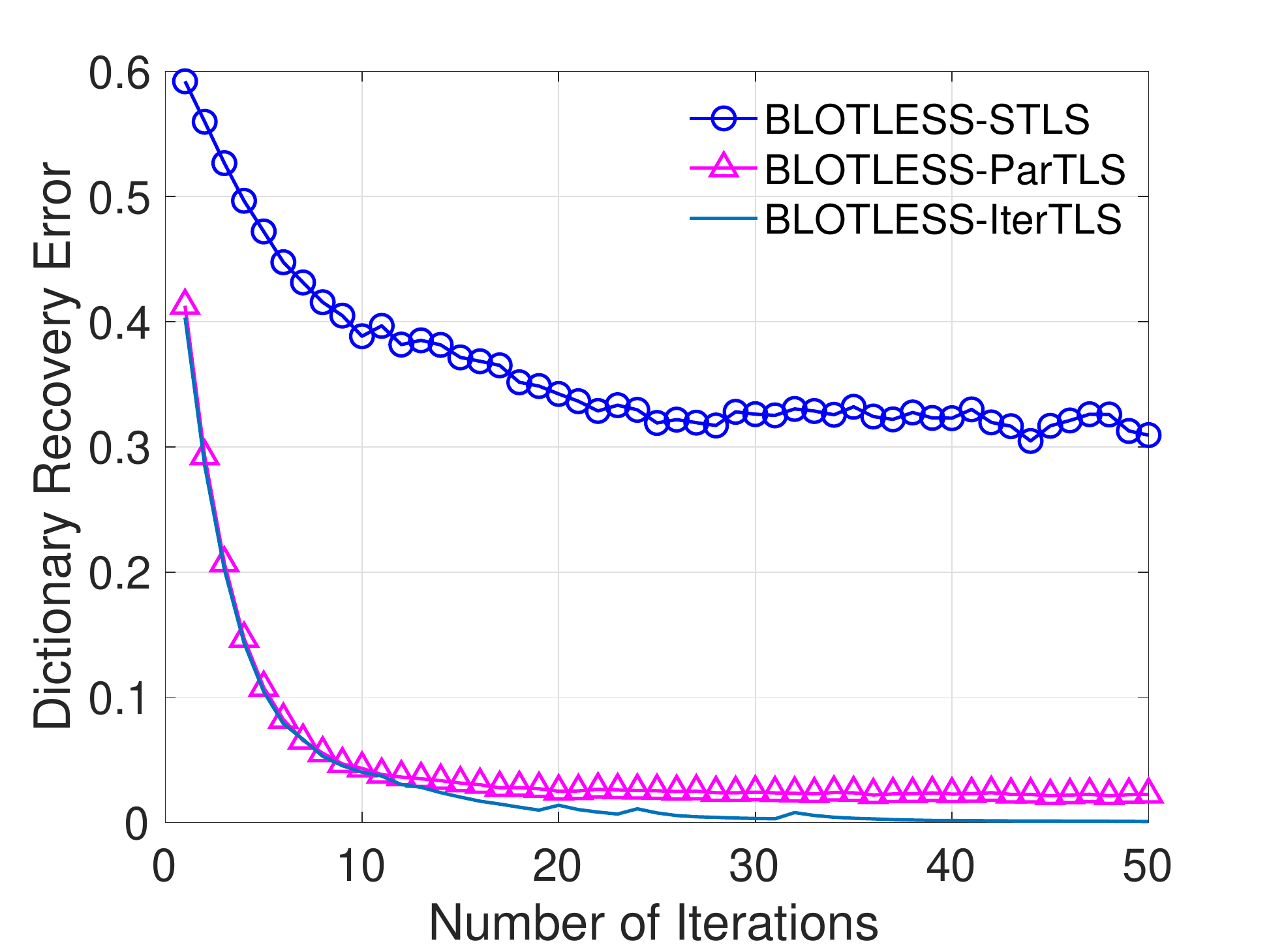}}		
	\caption{Performance comparison of different total least squares methods when used in the overall dictionary learning process. Results are averages of 50 independent trials.}
	\label{fig:comp_STLS_proj}		
\end{figure}

\begin{table}[h]
	\centering
	\caption{Runtime (seconds per iteration of the dictionary learning process) comparison for different total least squares methods: $m=l=64$, $\theta = 5/64$, and number of iterations $n_{\rm it} =50$. Results are averages of 50 independent trials using Matlab 2018b on a MacBook Pro with 16GB RAM and 2.3 GHz Intel Core i5 processor.}
	\label{tab:Run_time}
	\begin{tabular}{c|c|c|c|c}
		\hline
		& $n=200$ &$n=300$ & $n=400$ & $n=500$\\
		\hline
		STLS & 621.6544 & 836.3027 & 1098.5955 & 1265.1895 \\
		\hline
		ParTLS & 9.2544 & 18.1191 & 25.3678 & 31.0551 \\
		\hline
		IterTLS & 6.3446 & 8.5157 & 10.9026 & 13.8056\\
		\hline
	\end{tabular}
\end{table}

Fig. \ref{fig:comp_STLS_proj} compares the dictionary learning errors \eqref{eq:Dictionary_error} for both complete and over-complete dictionaries.
BLOTLESS-IterTLS converges the fastest and has the smallest error floor. 
Then a runtime comparison is given in Table \ref{tab:Run_time}. BLOTLESS-IterTLS  clearly  outperforms the other two methods. It is therefore used as the default dictionary update method in later simulations. 

\subsection{Robustness to Errors in Sparsity Pattern}

\begin{figure}[h]
	\centering
	\includegraphics[width=0.36\textwidth]{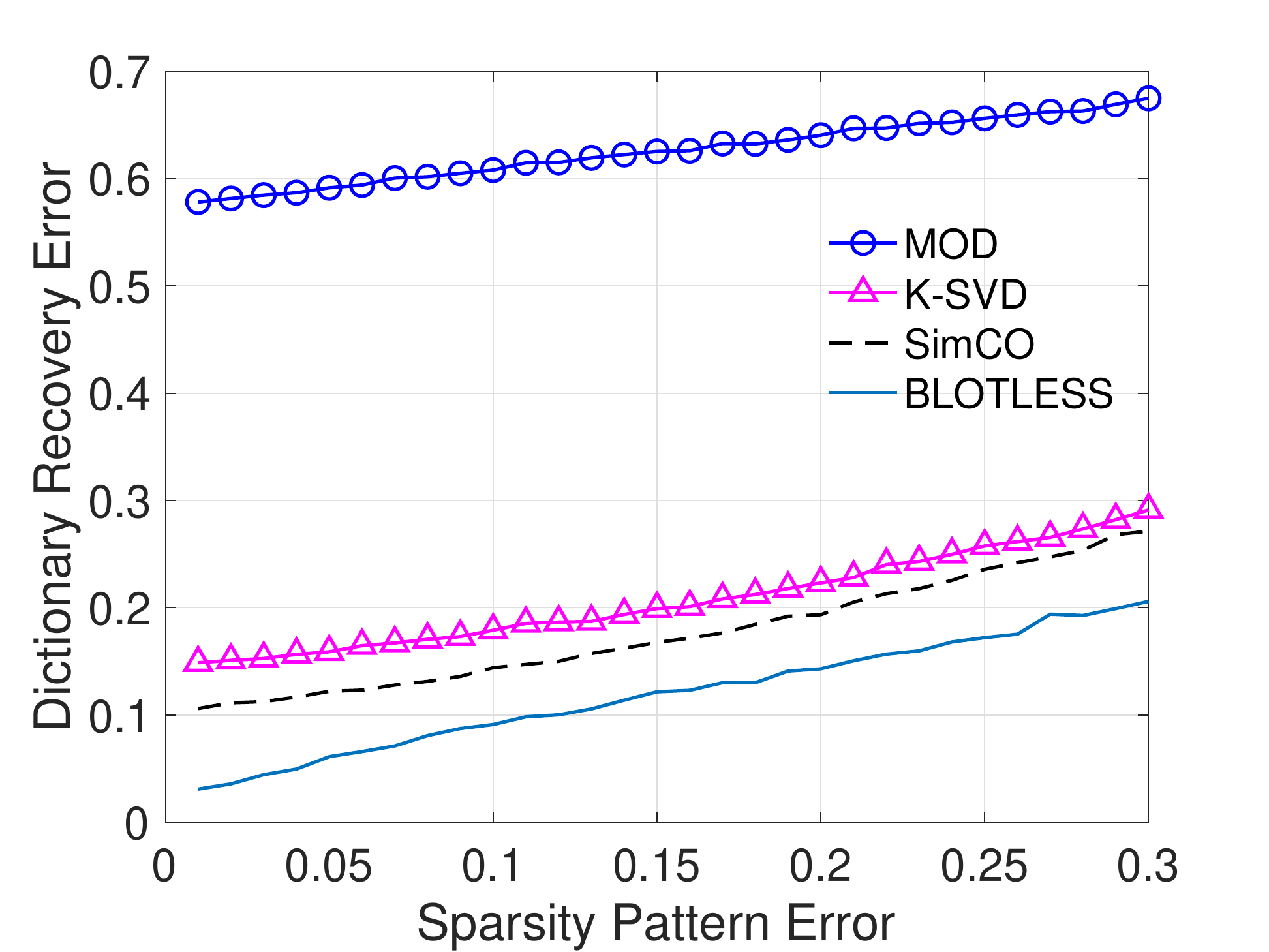}
	\caption{Performance comparison of different update algorithm when different sparse pattern mismatch is applied. $m=l=64$, $\theta=5/64$, $n=500$. Results are averages of 100 independent trials.}
	\label{fig:Sp_micmatch2}
\end{figure}

Simulations are next designed to evaluate the robustness of different dictionary update algorithms to errors in sparsity pattern. Towards that end, a fraction $r$ of indices in the true support are randomly chosen to be replaced with the same number of randomly chosen indices not in the true support set. This erroneous sparsity pattern is then fed into different dictionary update algorithms. The numerical results in Fig. \ref{fig:Sp_micmatch2} demonstrate the robustness of BLOTLESS (with IterTLS). 

\subsection{Dictionary Learning with BLOTLESS Update}
This subsection compares dictionary learning performance for different dictionary update methods. The sparse coding algorithm is OMP. IterTLS in Section \ref{subsub:ProjectedTLS} is used for BLOTLESS. Results for synthetic data are presented in Section \ref{subsub:SyntheticData} while Section \ref{subsub:RealData} focuses on image denoising using real data.

\subsubsection{Synthetic Data \label{subsub:SyntheticData}}

\begin{figure}[h]
	\centering
	\subfigure[$m=l=64$, $n=400$, $\theta=5/64$.]{
		\label{fig:3} 
		\includegraphics[width=0.23\textwidth]{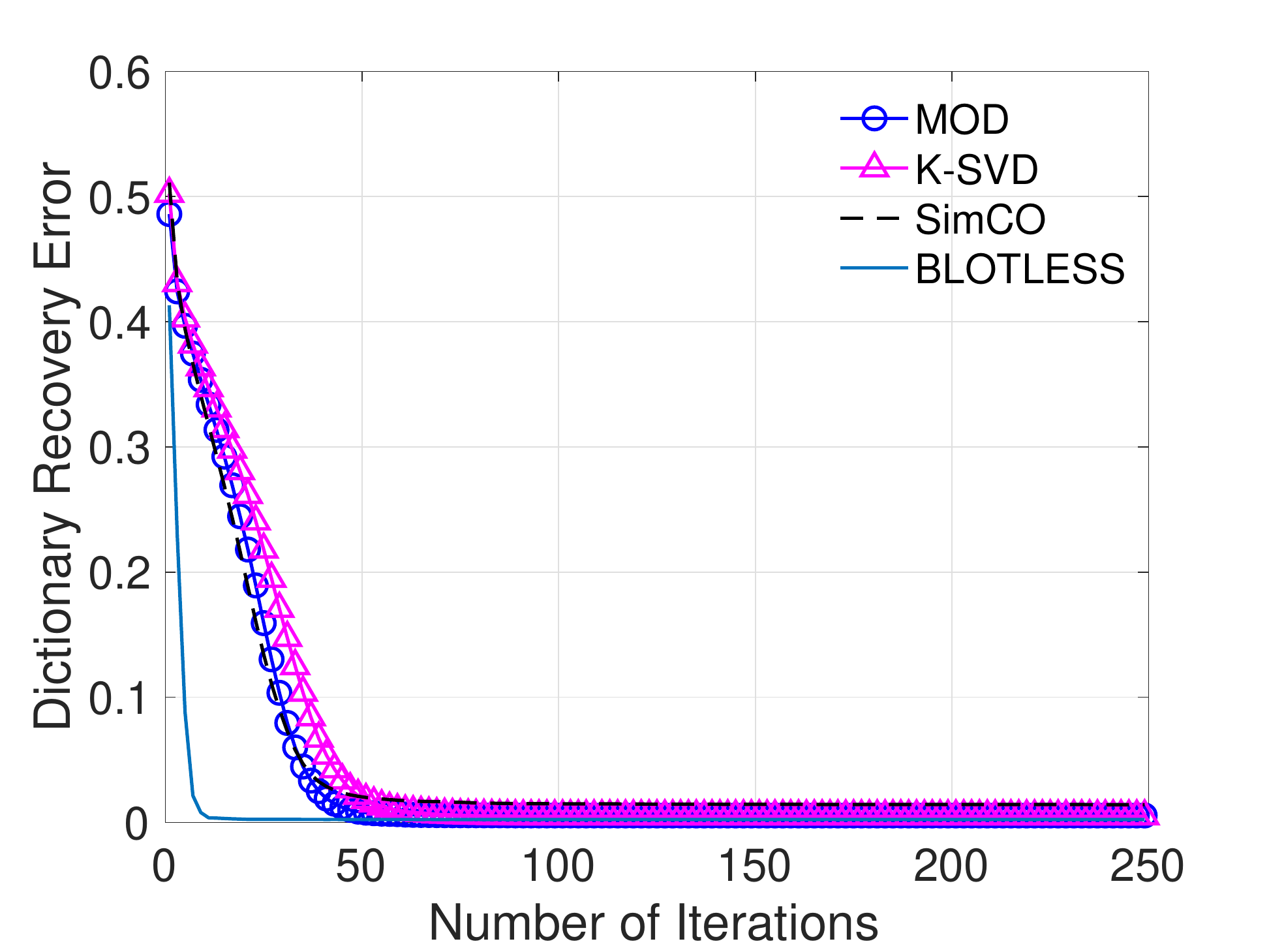}}
	\hspace{0.025cm}
	\subfigure[$m=64$, $l=128$, $n=600$, $\theta=5/128$.]{
		\label{fig:4} 
		\includegraphics[width=0.23\textwidth]{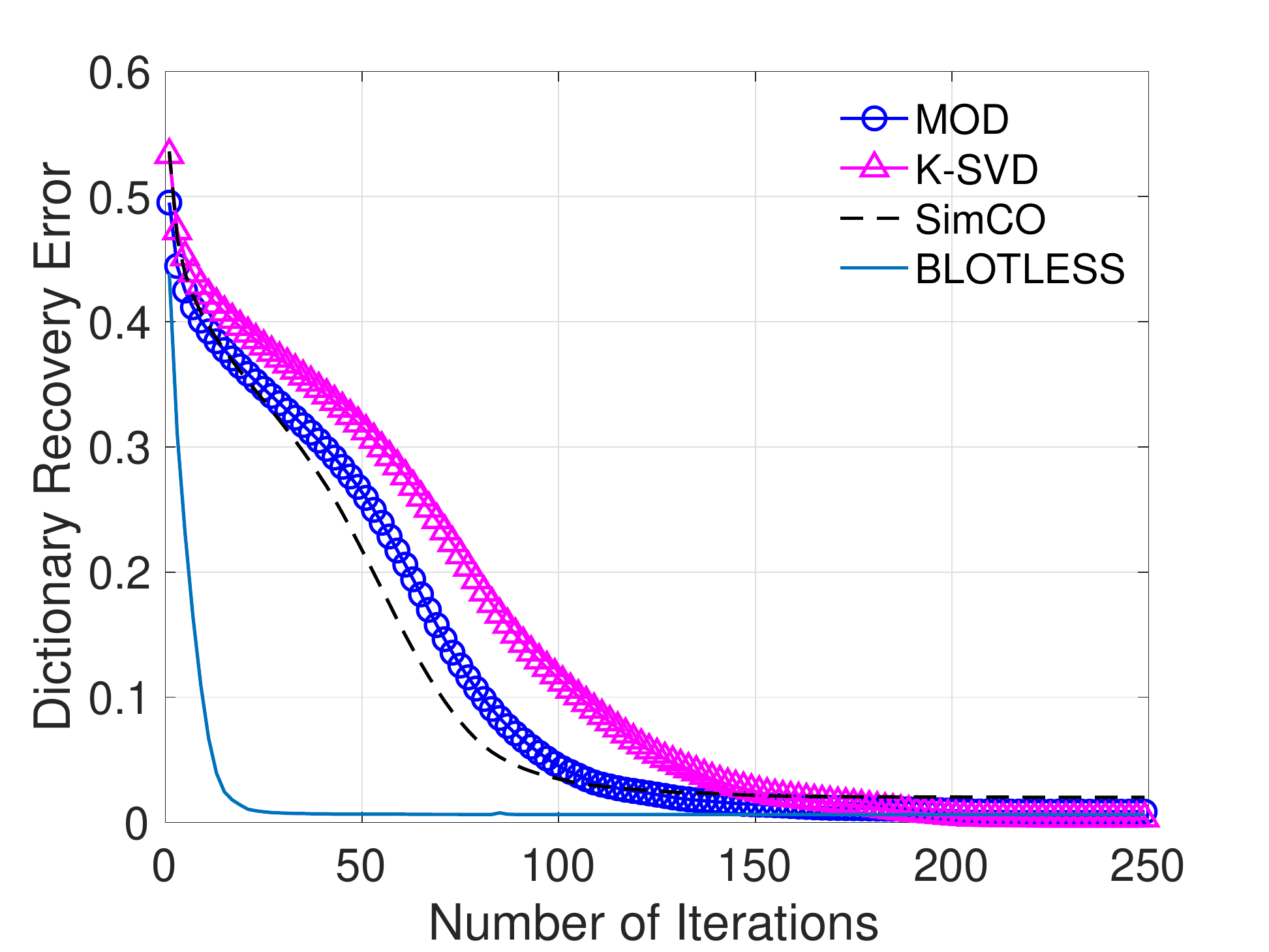}}
	\subfigure[$m=l=64$, $\theta=5/64$, $n_{\rm it}=150$.]{
		\label{fig:5} 
		\includegraphics[width=0.23\textwidth]{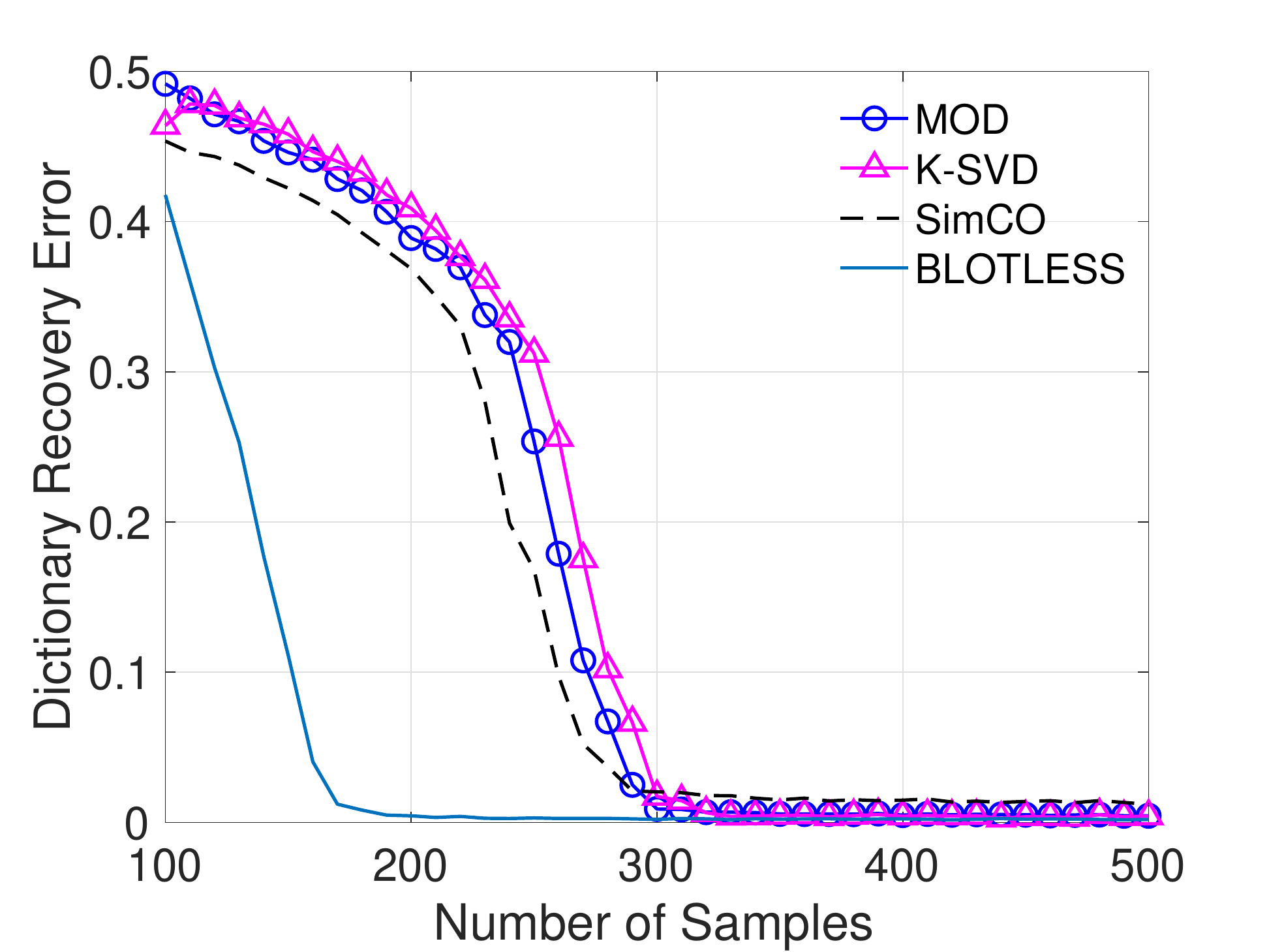}}
	\hspace{0.025cm}
	\subfigure[$m=64$, $l=128$, $\theta=5/128$, $n_{\rm it}=150$.]{
		\label{fig:7} 
		\includegraphics[width=0.23\textwidth]{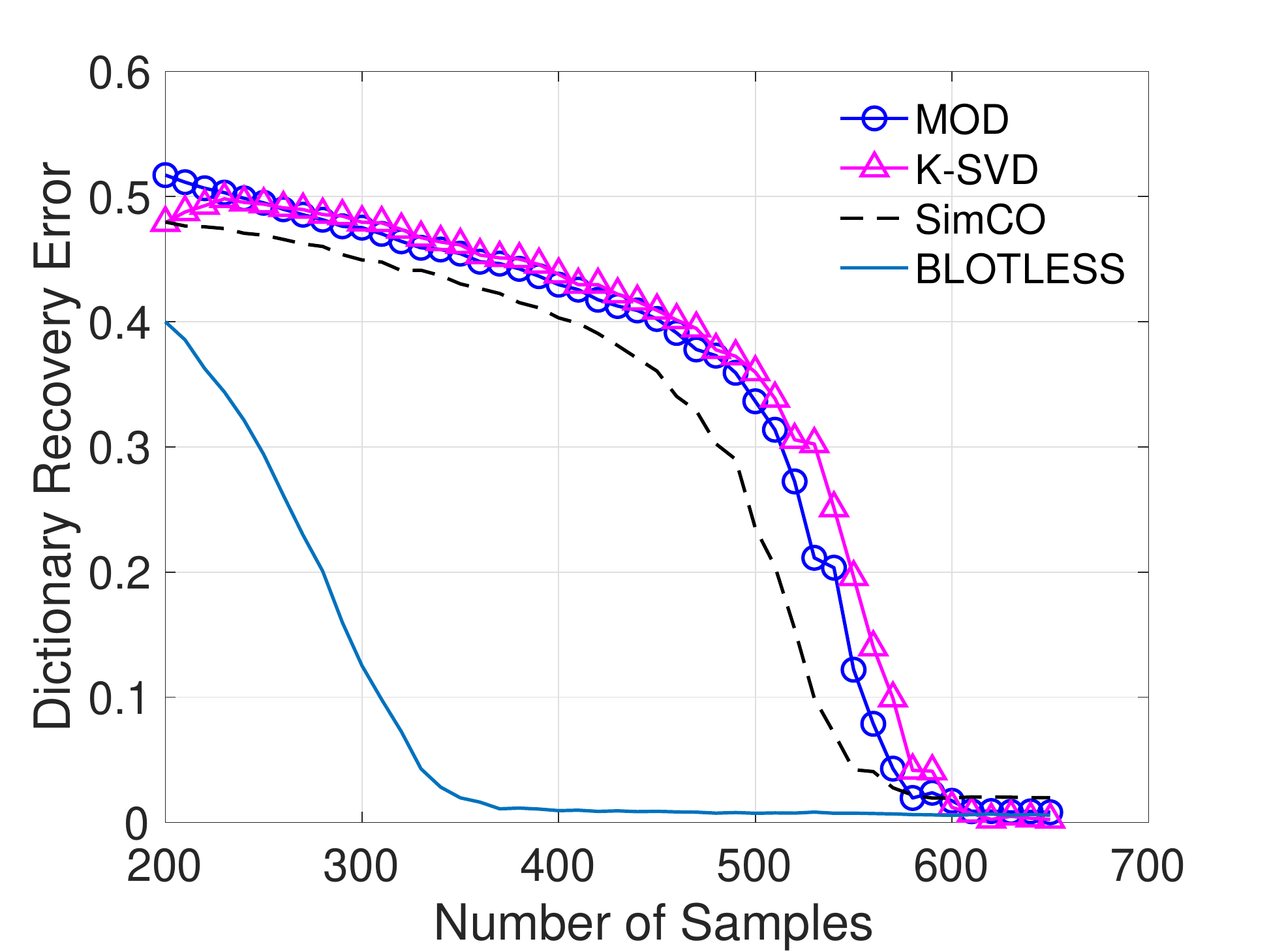}}
	\caption{Comparison of dictionary update methods for the noise free case. Results are averages of 100 trials.}
	\label{fig:DL_noisefree}
\end{figure}

\begin{figure}[h]
	\centering
	\subfigure[$m=l=64$, $n=500$, $\theta=5/64$.]{
		\label{fig:9} 
		\includegraphics[width=0.23\textwidth]{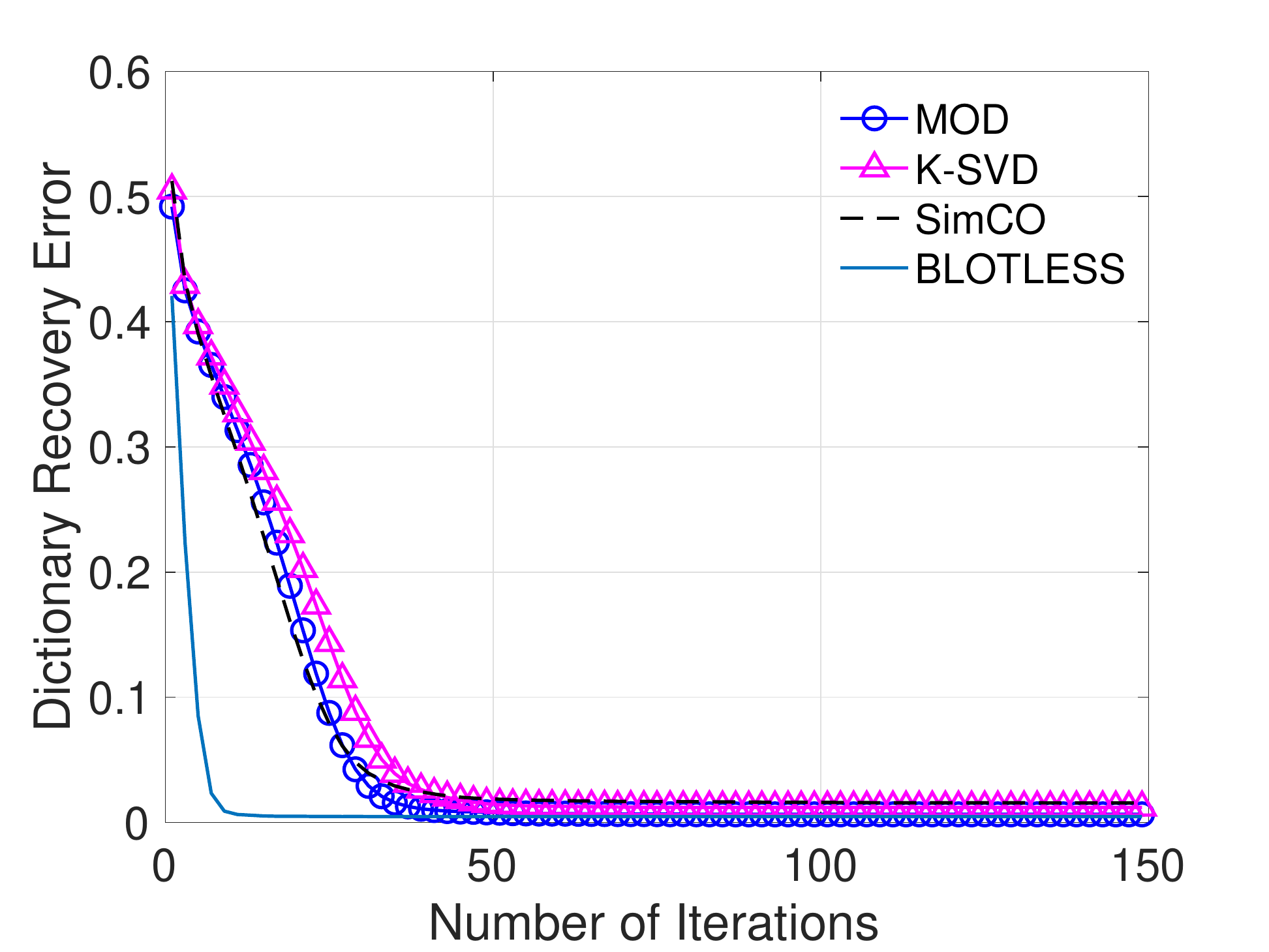}}
	\hspace{0.025cm}
	\subfigure[$m=64$, $l=128$, $n=700$, $\theta=5/128$.]{
		\label{fig:10} 
		\includegraphics[width=0.23\textwidth]{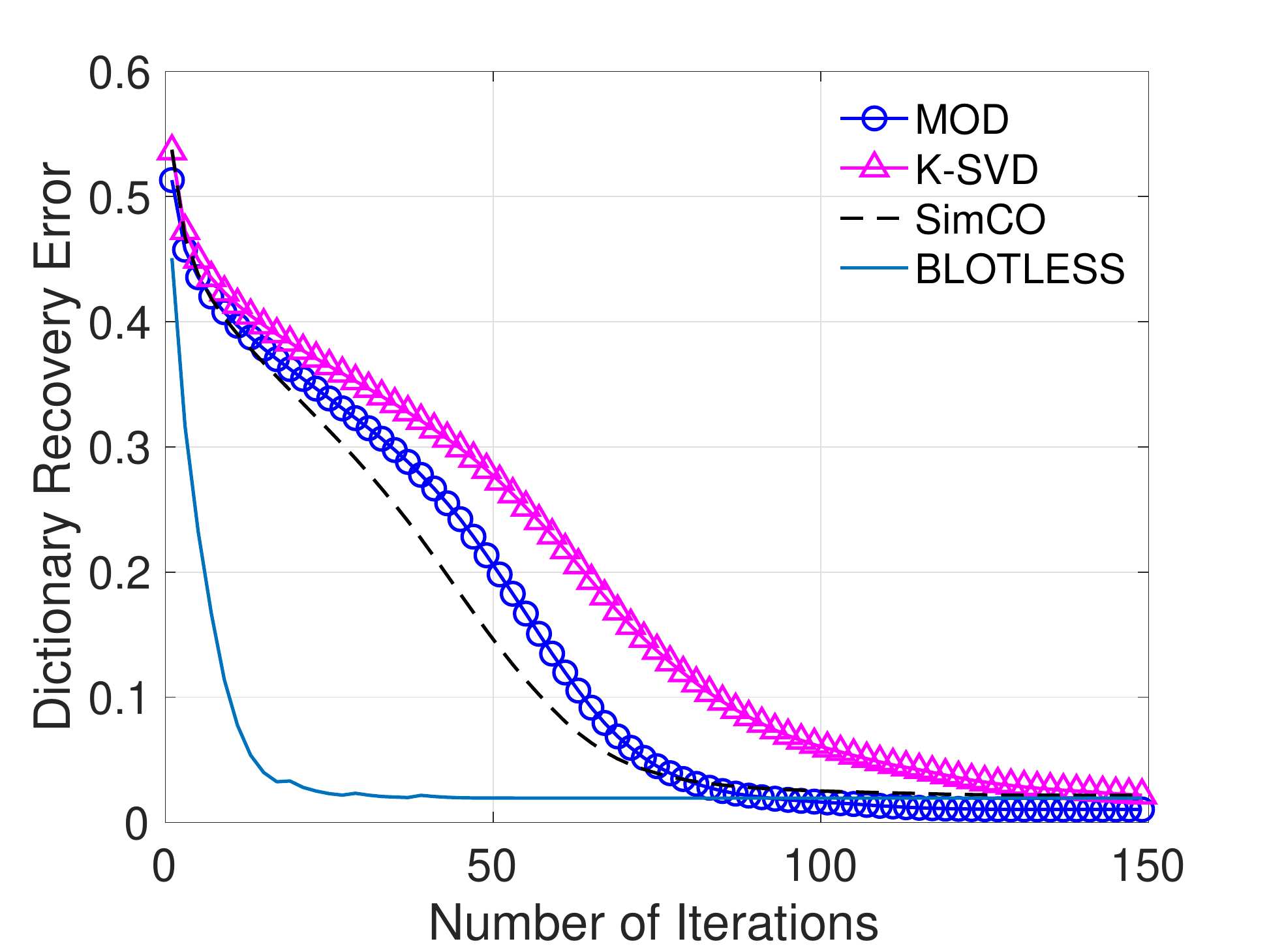}}
	\subfigure[$m=l=64$, $\theta=5/64$, $n_{\rm it}=150$.]{
		\label{fig:8} 
		\includegraphics[width=0.23\textwidth]{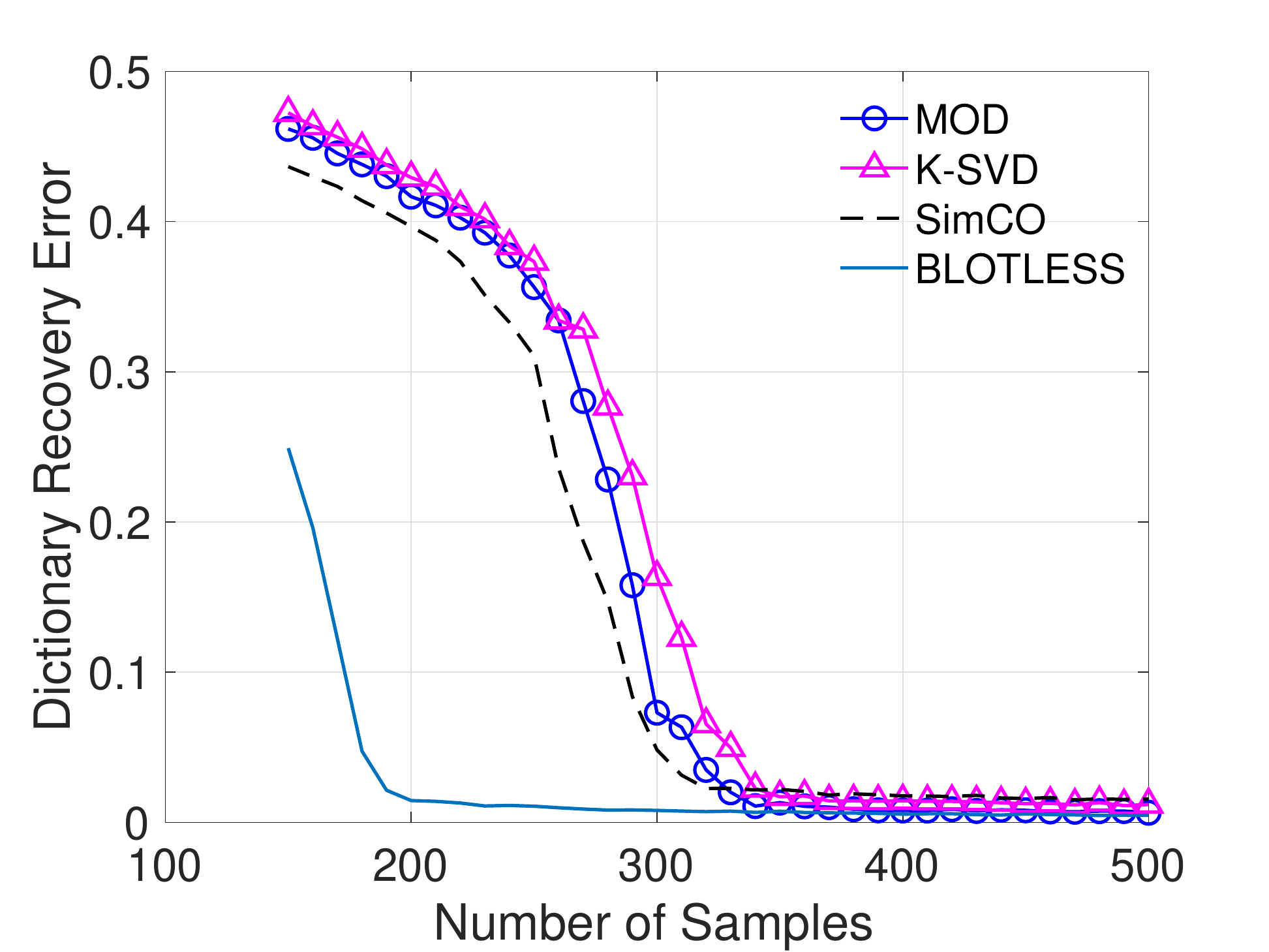}}
	\hspace{0.025cm}
	\subfigure[$m=64$, $l=128$, $\theta=5/128$, $n_{\rm it}=150$.]{
		\label{fig:11} 
		\includegraphics[width=0.23\textwidth]{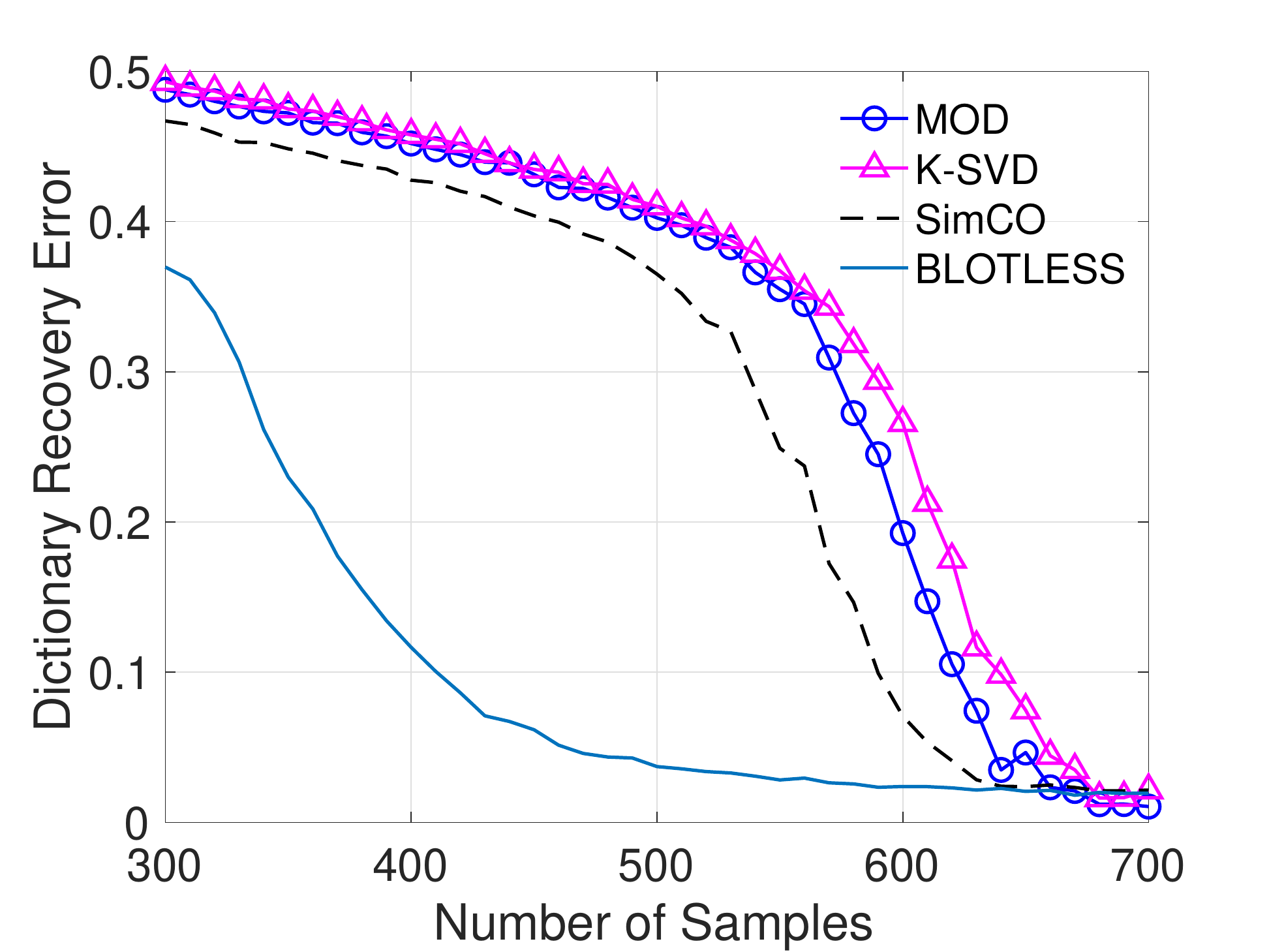}}
	\caption{Comparison of dictionary update methods for the noisy case: SNR is 15dB. Results are averages of 100 trials. $n_{\rm it}$ denoes the number of iterations.}
	\label{fig:DL_noisy}
\end{figure}

Fig. \ref{fig:DL_noisefree} and \ref{fig:DL_noisy} compare the performance of dictionary learning using different dictionary update algorithms. Fig. \ref{fig:DL_noisefree} focus on the noise free case where $\bm{Y}=\bm{D}_0 \bm{X}_0$ and Fig. \ref{fig:DL_noisy} concerns with the noisy case where $\bm{Y}=\bm{D}_0 \bm{X}_0 + \bm{W}$ where $\bm{W}$ is the additive Gaussian noise matrix with i.i.d. entries and the signal-to-noise ratio (SNR) is set to 15dB. Both figures include the cases of complete and over-complete dictionaries. The results presented in Fig. \ref{fig:DL_noisefree} and \ref{fig:DL_noisy} clearly indicate that dictionary learning based on BLOTLESS converges much faster and needs at least $1/3$ less training samples than other benchmark dictionary update methods. 

\begin{figure}[h]
	\centering
	\includegraphics[width=0.36\textwidth]{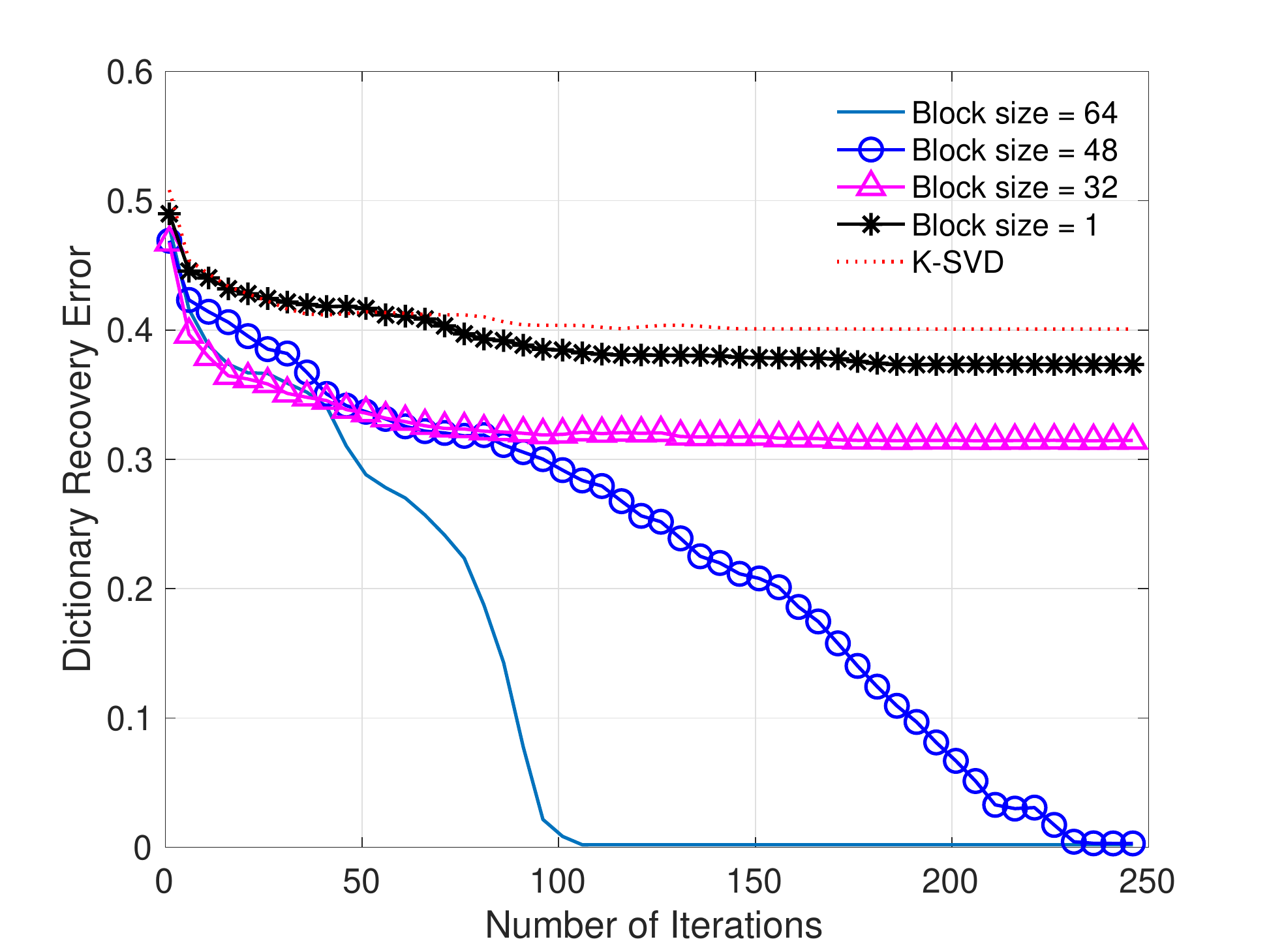}
	\caption{Dictionary learning via BLOTLESS with  different block sizes: $m =l = 64$, $n=200$, $\theta=5/64$.}
	\label{diff_block_size}
\end{figure}

In BLOTLESS update, blocks of the dictionary (sub-dictionaries) are updated sequentially. Fig. \ref{diff_block_size} compares the performance with different block sizes. Note that when each block contains only one dictionary item, the dictionary update problem is the same as that in K-SVD. Hence the performance of K-SVD is added in Fig. \ref{diff_block_size}. Simulations suggest that the larger the dictionary blocks are, the faster the convergence is and the better performance is. The performance of BLOTLESS with block size one is slightly better than that of K-SVD. This is because the TLS step in IterTLS does not  enforce the sparsity pattern and hence better accommodates errors in the estimated sparsity pattern. 

\subsubsection{Real Data \label{subsub:RealData}}

We use the Olivetti Research Laboratory (ORL) face database \cite{samaria1994parameterisation} for dictionary learning and then use the learned dictionary for image denoising. 

For dictionary learning, according to the simulation results in Section \ref{subsub:SyntheticData}, $n=500$ samples of size $8\times 8$ patches from face images \textcolor{black}{are enough} for training a dictionary via BLOTLESS. The parameters used in dictionary learning are $m=64$, $l=128$, $\theta=10/128$, and $n_{\rm it}=20$. After learning a dictionary, image denoising \cite{elad2006image} is performed using test images from the same dataset. The denoising results are shown in Table \ref{tab:denoise_comp}, where four test images are used. In all these four tests, the BLOTLESS method outperforms all other algorithms, which is consistent with these simulations in \ref{subsub:SyntheticData}.

\begin{table*}[h]
	\centering
	\fontsize{6.5}{8}\selectfont
	\caption{Denoising comparison using different learnt dictionaries, where the denoised PSNR (dB) are computed and shown in table.}
	\label{tab:denoise_comp}
	\begin{tabular}{|c|c|c|c|c|c|c|c|c|c|c|c|c|}
		\hline
		Original Image&
		\multicolumn{3}{c|}{\tabincell{c}{\includegraphics[scale=0.55]{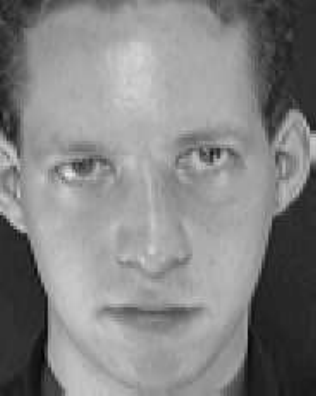}} }&\multicolumn{3}{c|}{\tabincell{c}{\includegraphics[scale=0.55]{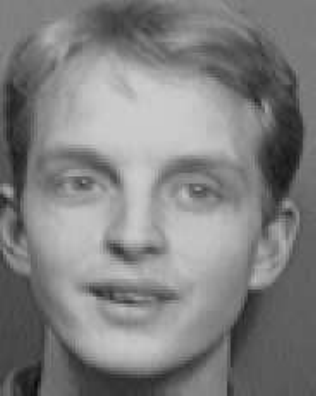}}}&\multicolumn{3}{c|}{\tabincell{c}{\includegraphics[scale=0.55]{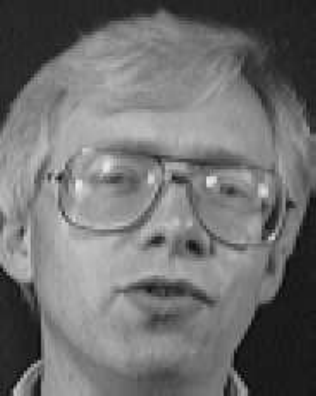}}}&\multicolumn{3}{c|}{\tabincell{c}{ \includegraphics[scale=0.55]{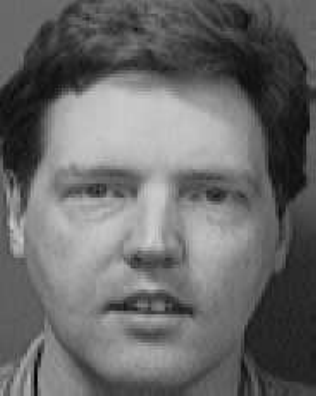}}}\cr\cline{1-13}
		Noisy Image&{\tabincell{c}{28.13 dB\\ \includegraphics[scale=0.3]{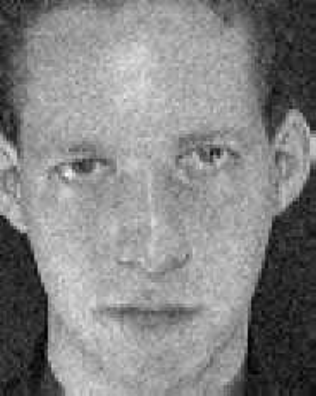}}}&{\tabincell{c}{22.11 dB\\ \includegraphics[scale=0.3]{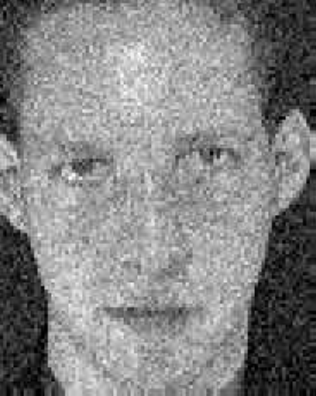}}}&{\tabincell{c}{18.59 dB\\ \includegraphics[scale=0.3]{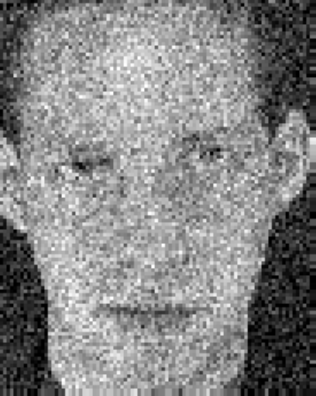}}}&{\tabincell{c}{28.13 dB\\ \includegraphics[scale=0.3]{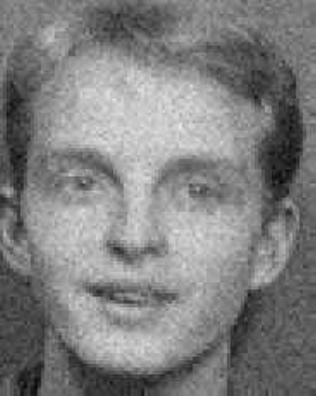}}}&{\tabincell{c}{22.11 dB\\ \includegraphics[scale=0.3]{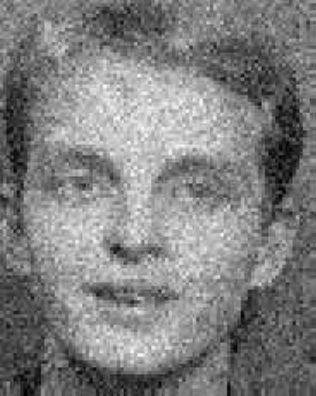}}}&{\tabincell{c}{18.59 dB\\ \includegraphics[scale=0.3]{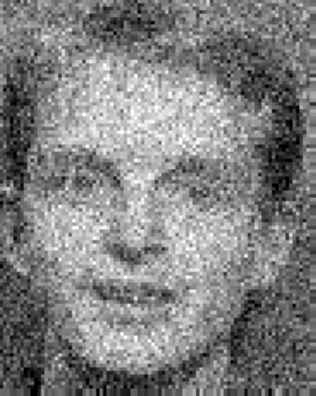}}}&{\tabincell{c}{28.13 dB\\ \includegraphics[scale=0.3]{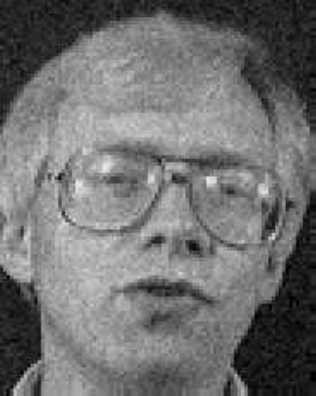}}}&{\tabincell{c}{22.11 dB\\ \includegraphics[scale=0.3]{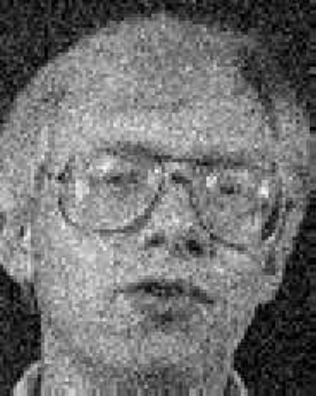}}}&{\tabincell{c}{18.59 dB\\ \includegraphics[scale=0.3]{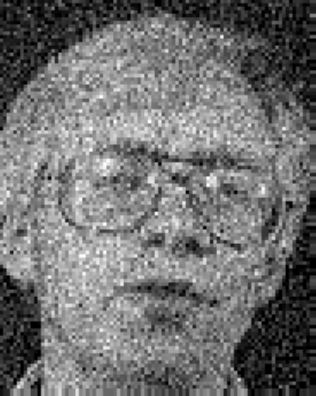}}}&{\tabincell{c}{28.13 dB\\ \includegraphics[scale=0.3]{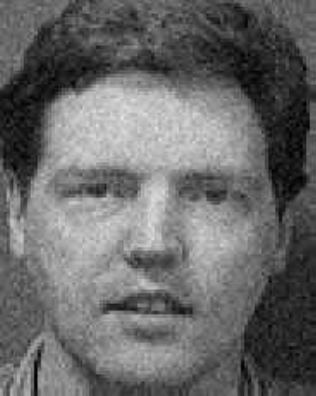}}}&{\tabincell{c}{22.11 dB\\ \includegraphics[scale=0.3]{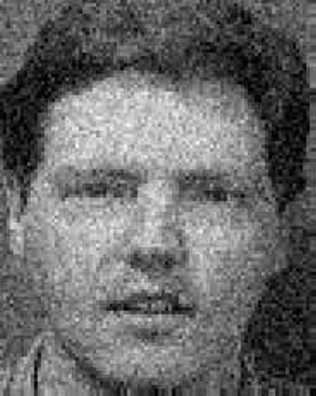}}}&{\tabincell{c}{18.59 dB\\ \includegraphics[scale=0.3]{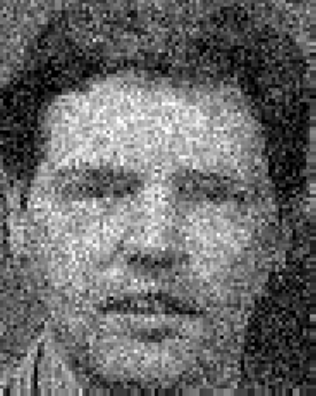}}}\cr
		\hline
		\hline
		MOD&{\tabincell{c}{33.00 dB\\ \includegraphics[scale=0.3]{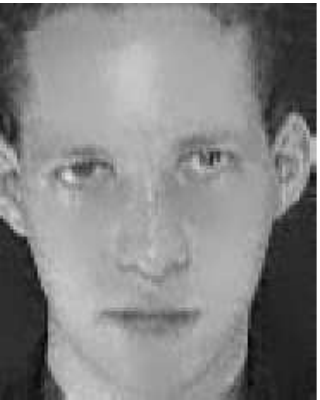}}}&{\tabincell{c}{29.24 dB\\ \includegraphics[scale=0.3]{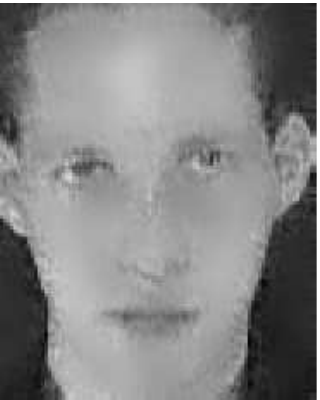}}}&{\tabincell{c}{27.22 dB\\ \includegraphics[scale=0.3]{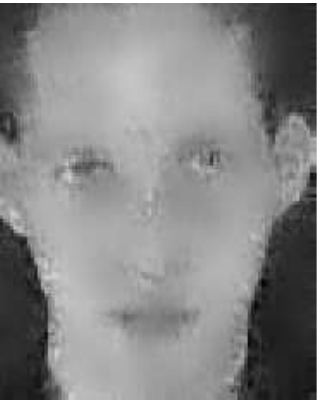}}}&{\tabincell{c}{32.68 dB\\ \includegraphics[scale=0.3]{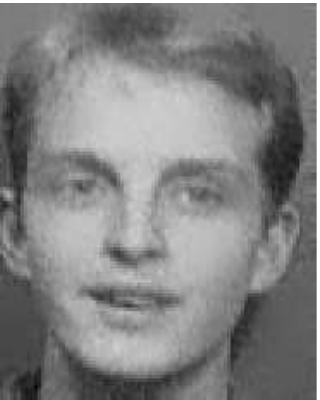}}}&{\tabincell{c}{28.84 dB\\ \includegraphics[scale=0.3]{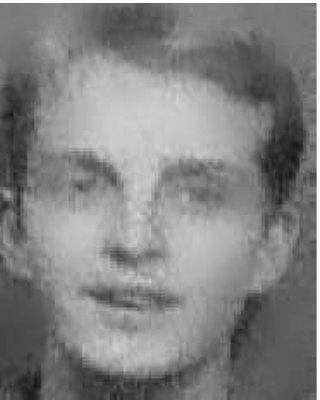}}}&{\tabincell{c}{26.82 dB\\ \includegraphics[scale=0.3]{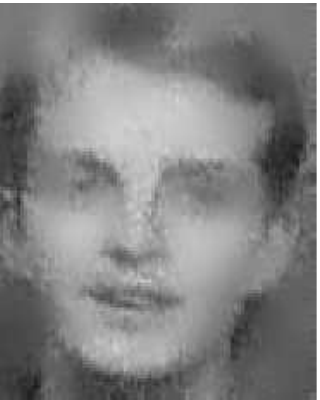}}}&{\tabincell{c}{31.95 dB\\ \includegraphics[scale=0.3]{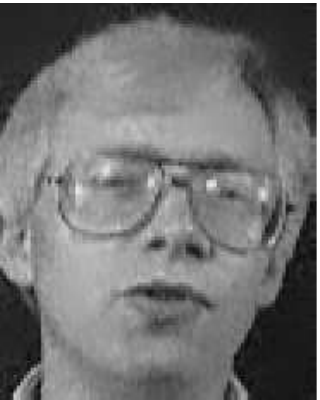}}}&{\tabincell{c}{27.43 dB\\ \includegraphics[scale=0.3]{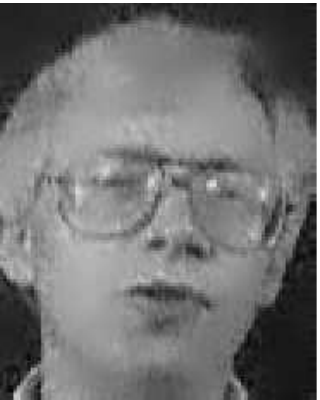}}}&{\tabincell{c}{25.76 dB\\ \includegraphics[scale=0.3]{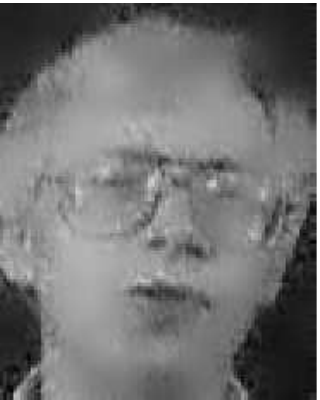}}}&{\tabincell{c}{32.13 dB\\ \includegraphics[scale=0.3]{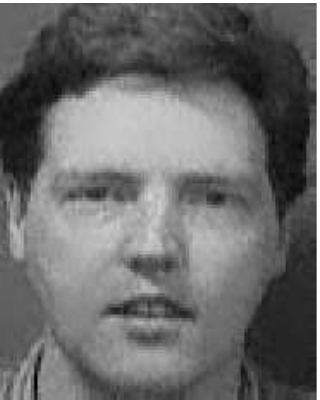}}}&{\tabincell{c}{28.23 dB\\ \includegraphics[scale=0.3]{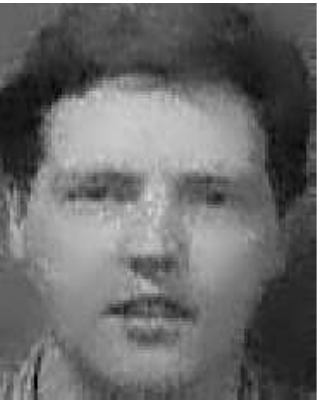}}}&{\tabincell{c}{26.26 dB\\ \includegraphics[scale=0.3]{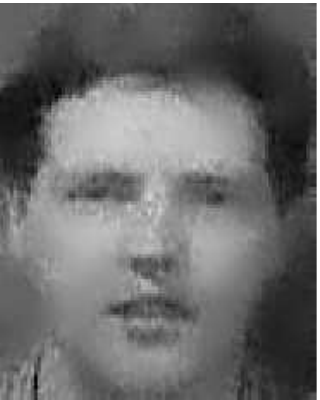}}}\cr\hline
		K-SVD&{\tabincell{c}{32.50 dB\\ \includegraphics[scale=0.3]{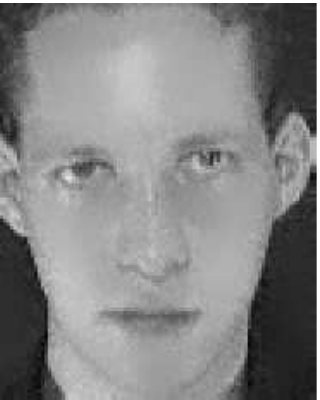}}}&{\tabincell{c}{28.72 dB\\ \includegraphics[scale=0.3]{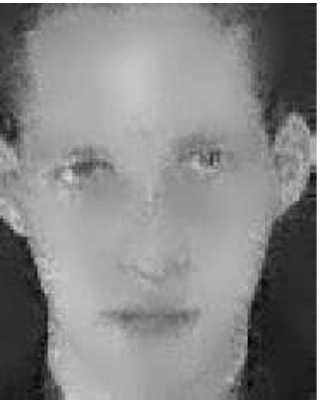}}}&{\tabincell{c}{26.79 dB\\ \includegraphics[scale=0.3]{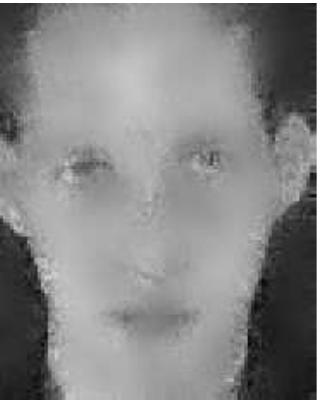}}}&{\tabincell{c}{32.03 dB\\ \includegraphics[scale=0.3]{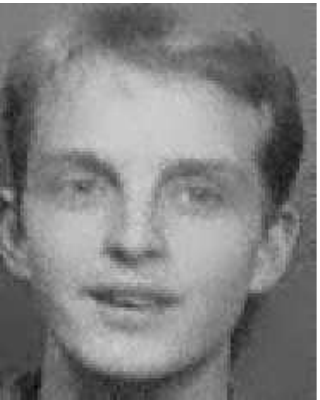}}}&{\tabincell{c}{28.26 dB\\ \includegraphics[scale=0.3]{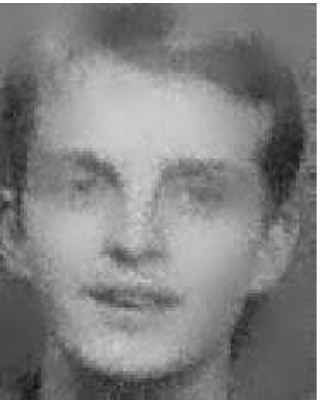}}}&{\tabincell{c}{26.35 dB\\ \includegraphics[scale=0.3]{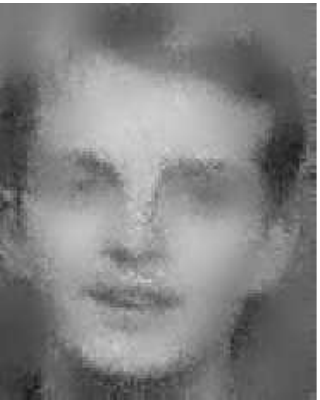}}}&{\tabincell{c}{31.49 dB\\ \includegraphics[scale=0.3]{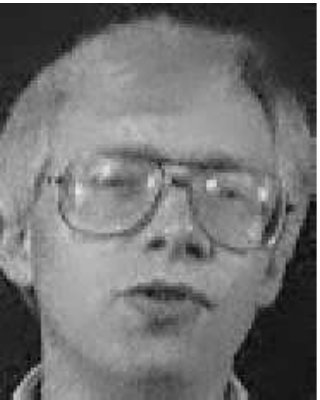}}}&{\tabincell{c}{27.43 dB\\ \includegraphics[scale=0.3]{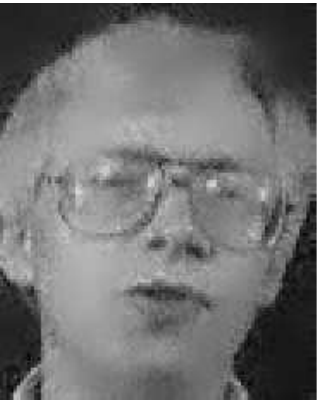}}}&{\tabincell{c}{25.31 dB\\ \includegraphics[scale=0.3]{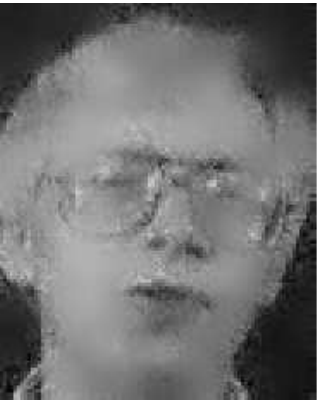}}}&{\tabincell{c}{31.58 dB\\ \includegraphics[scale=0.3]{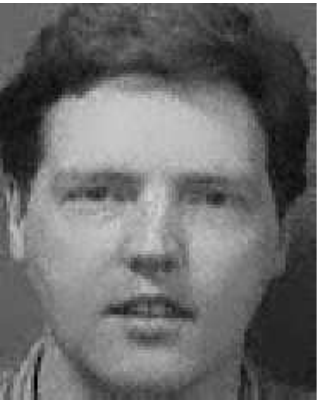}}}&{\tabincell{c}{27.74 dB\\ \includegraphics[scale=0.3]{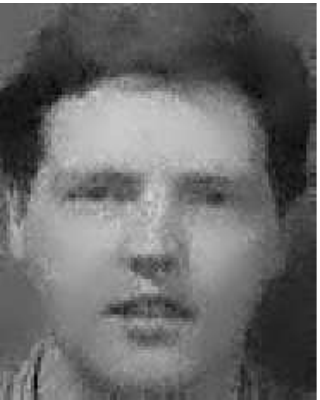}}}&{\tabincell{c}{25.81 dB\\ \includegraphics[scale=0.3]{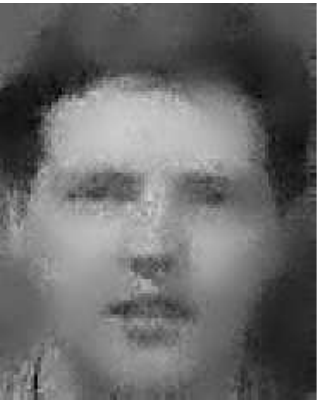}}}\cr\hline
		SimCO&{\tabincell{c}{33.43 dB\\ \includegraphics[scale=0.3]{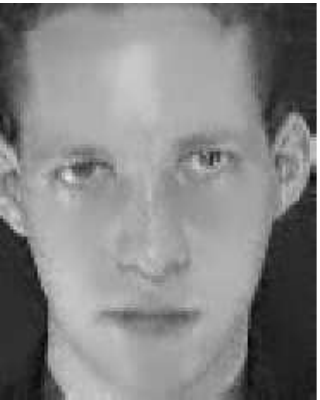}}}&{\tabincell{c}{29.78 dB\\ \includegraphics[scale=0.3]{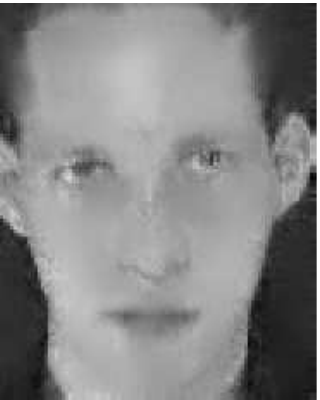}}}&{\tabincell{c}{27.81 dB\\ \includegraphics[scale=0.3]{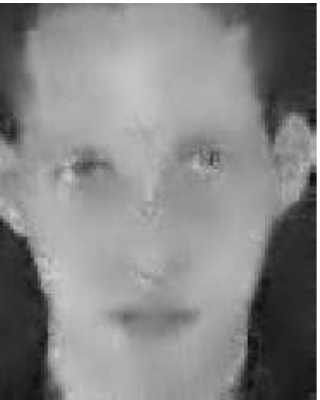}}}&{\tabincell{c}{33.58 dB\\ \includegraphics[scale=0.3]{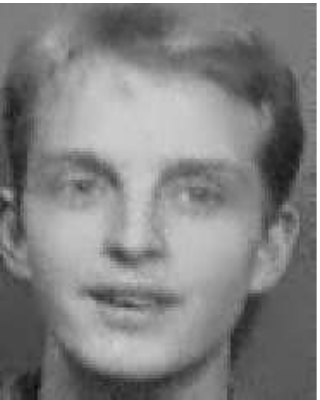}}}&{\tabincell{c}{30.11 dB\\ \includegraphics[scale=0.3]{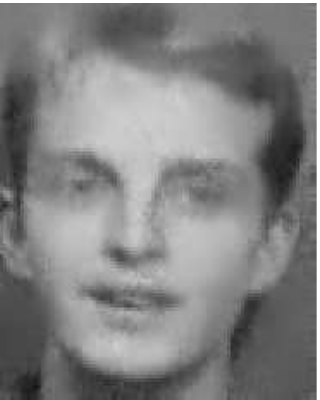}}}&{\tabincell{c}{28.04 dB\\ \includegraphics[scale=0.3]{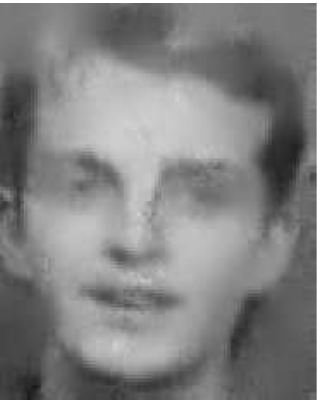}}}&{\tabincell{c}{32.18 dB\\ \includegraphics[scale=0.3]{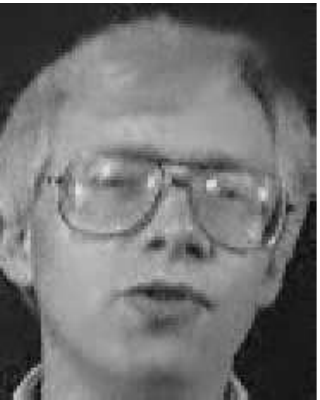}}}&{\tabincell{c}{28.51 dB\\ \includegraphics[scale=0.3]{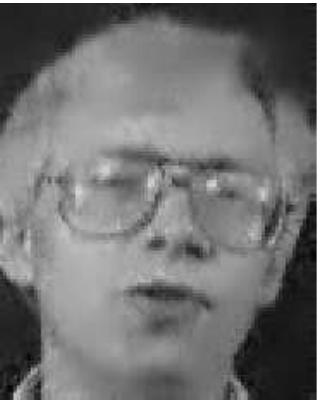}}}&{\tabincell{c}{26.47 dB\\ \includegraphics[scale=0.3]{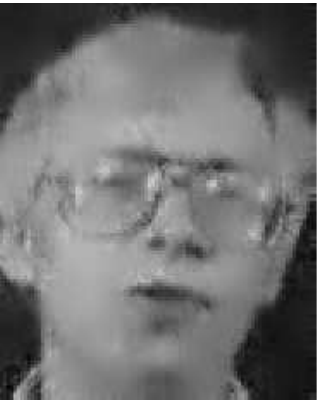}}}&{\tabincell{c}{32.65 dB\\ \includegraphics[scale=0.3]{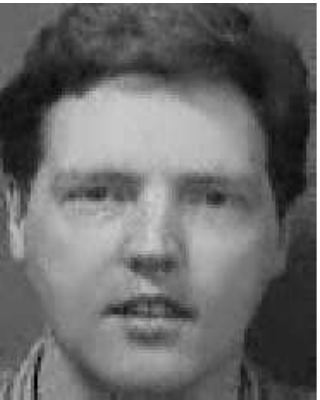}}}&{\tabincell{c}{29.18 dB\\ \includegraphics[scale=0.3]{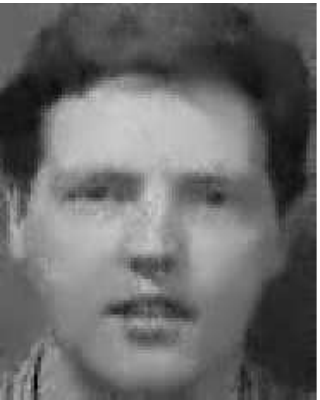}}}&{\tabincell{c}{27.27 dB\\ \includegraphics[scale=0.3]{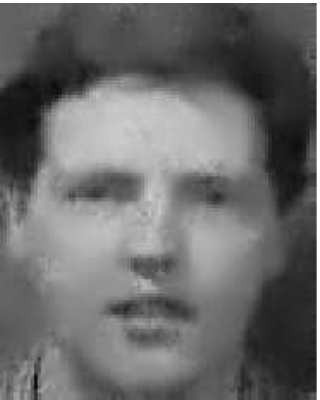}}}\cr\hline
		BLOTLESS&{\tabincell{c}{\textbf{33.67 dB}\\ \includegraphics[scale=0.3]{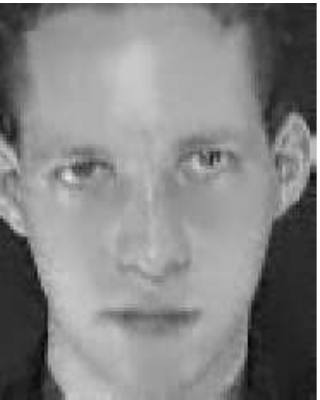}}}&{\tabincell{c}{\textbf{29.90 dB}\\ \includegraphics[scale=0.3]{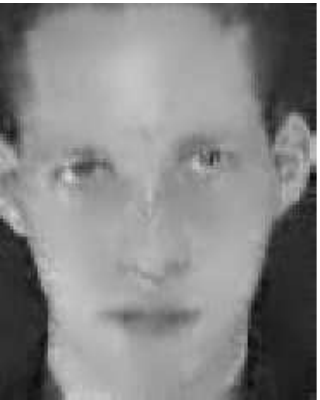}}}&{\tabincell{c}{\textbf{27.95 dB}\\ \includegraphics[scale=0.3]{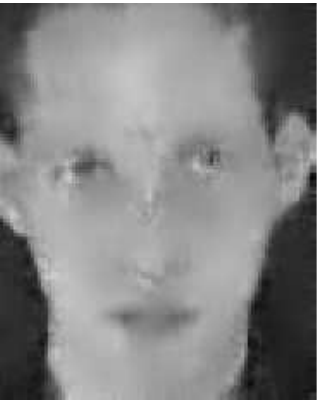}}}&{\tabincell{c}{\textbf{33.91 dB}\\ \includegraphics[scale=0.3]{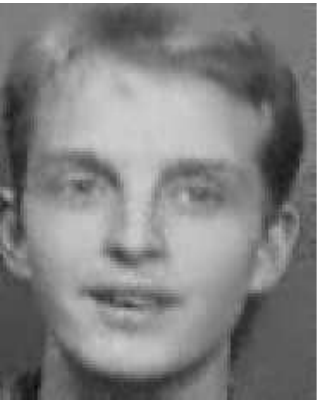}}}&{\tabincell{c}{\textbf{30.33 dB}\\ \includegraphics[scale=0.3]{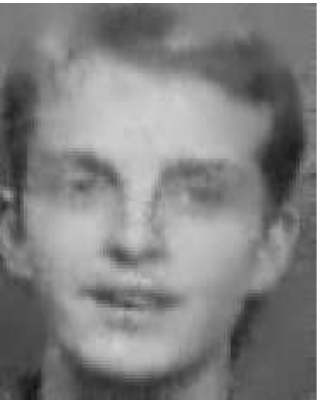}}}&{\tabincell{c}{\textbf{28.20 dB}\\ \includegraphics[scale=0.3]{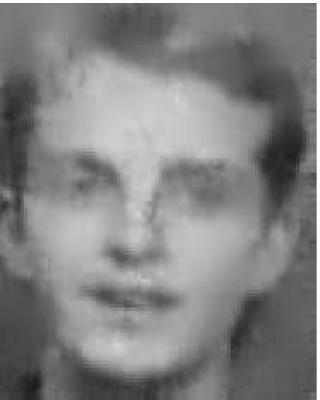}}}&{\tabincell{c}{\textbf{32.38 dB}\\ \includegraphics[scale=0.3]{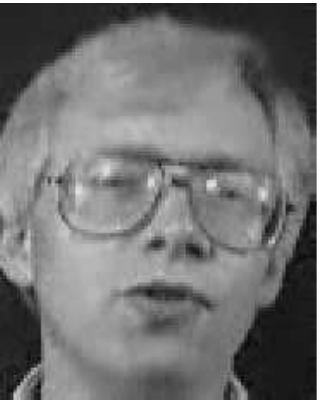}}}&{\tabincell{c}{\textbf{28.67 dB}\\ \includegraphics[scale=0.3]{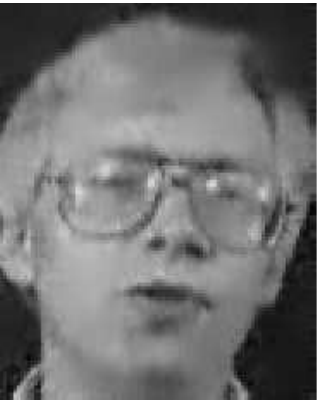}}}&{\tabincell{c}{\textbf{26.66 dB}\\ \includegraphics[scale=0.3]{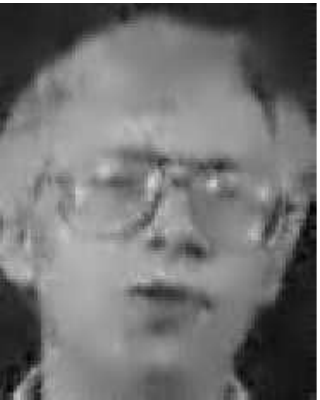}}}&{\tabincell{c}{\textbf{32.88 dB}\\ \includegraphics[scale=0.3]{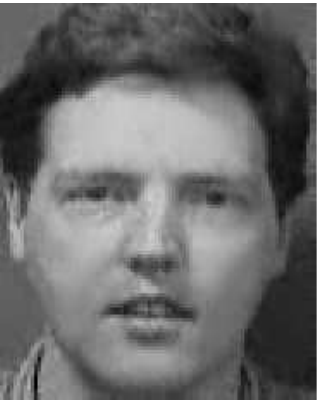}}}&{\tabincell{c}{\textbf{29.38 dB}\\ \includegraphics[scale=0.3]{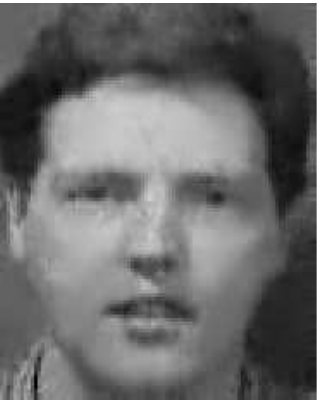}}}&{\tabincell{c}{\textbf{27.45 dB}\\ \includegraphics[scale=0.3]{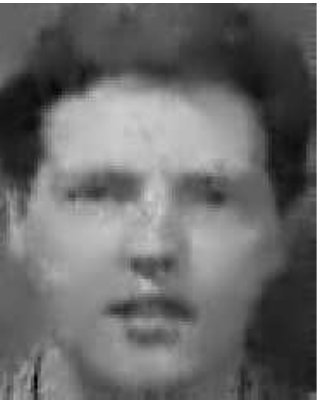}}}\cr\hline
	\end{tabular}
\end{table*}

%

\section{Conclusion \label{sec:Conclusion}}
This paper proposed a BLOTLESS algorithm for dictionary update. It divides the dictionary into sub-dictionaries, each of which is non-overcomplete. Then BLOTLESS updates a sub-dictionary and the corresponding sparse coefficients using least sqaures or total least squares approaches.  Necessary conditions for unique recovery are identified and they hold with high probability when the number of training samples is larger than the derived bounds in Proposition \ref{prop:NumOfSamples}. Simulations show that these bounds match the simulations well, and that BLOTLESS outperforms other benchmark algorithms. \textcolor{black}{One future direction} is to find sufficient bounds for unique recovery and their comparisons to the necessary bounds.

\appendix

\subsection{Proof of Proposition \ref{prop:NumOfSamples}}

The proof needs Hoeffding's inequality\cite{Fisher1963Probability} for Bernoulli random variables, stated below. 

\begin{lem}[Hoeffding's Inequality]
	For $N$ many identical Bernoulli random variables $\{X_{i},i=1,2,...N\}$. Each $X_{i}$ takes the value 1 with probability $p$ and 0 with probability $(1-p)$, then the following Hoeffding's inequality holds
	\begin{equation}
	\Pr\left(\sum_{i=1}^{N}X_{i} \geq (p + \lambda)N\right) \leqslant \exp(-2\lambda^2 N),
	\end{equation}
	where $ \lambda >0$ is a constant number.
\end{lem}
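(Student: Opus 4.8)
The plan is to prove this tail bound via the \emph{Chernoff bounding method} combined with a bound on the moment generating function (MGF) of a single centred Bernoulli variable. First I would exploit the monotonicity of the exponential: for any free parameter $s>0$, the event $\{\sum_i X_i \geq (p+\lambda)N\}$ coincides with $\{e^{s\sum_i X_i}\geq e^{s(p+\lambda)N}\}$, so Markov's inequality yields
\begin{equation*}
\Pr\left(\sum_{i=1}^{N}X_i \geq (p+\lambda)N\right) \leq e^{-s(p+\lambda)N}\,\mathbb{E}\!\left[e^{s\sum_i X_i}\right].
\end{equation*}
Because the $X_i$ are independent and identically distributed, the MGF factorises as $\mathbb{E}[e^{s\sum_i X_i}] = \big(\mathbb{E}[e^{sX_1}]\big)^N$, so the problem reduces to controlling the MGF of one variable. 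Writing $\mathbb{E}[e^{sX_1}] = e^{sp}\,\mathbb{E}[e^{s(X_1-p)}]$ lets me absorb the $e^{sp}$ factors against $e^{-s(p+\lambda)N}$ and work with the centred variable.

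The crux of the argument, and the step I expect to be the main obstacle, is a tight bound on this single-variable MGF --- namely \emph{Hoeffding's lemma}: for the centred variable $Z := X_1 - p$, which takes values in the interval $[-p,\,1-p]$ of length one, one has $\mathbb{E}[e^{sZ}] \leq e^{s^2/8}$. To establish this I would set $\psi(s) := \ln \mathbb{E}[e^{sZ}]$ and bound it by its second-order Taylor expansion: here $\psi(0)=0$ and $\psi'(0)=\mathbb{E}[Z]=0$ because $Z$ is centred, so $\psi(s) = \tfrac{1}{2}s^2\,\psi''(\xi)$ for some $\xi\in(0,s)$. A direct computation identifies $\psi''(\xi)$ as the variance of $Z$ under the exponentially tilted measure $d\tilde{P}\propto e^{\xi Z}\,dP$; since $Z$ still lies in an interval of length one under any measure, this variance is at most $1/4$ by the elementary (Bhatia--Davis) bound on the variance of a bounded variable. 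Hence $\psi(s)\leq s^2/8$, that is $\mathbb{E}[e^{sZ}]\leq e^{s^2/8}$.

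Combining the pieces, after the cancellation of the $e^{sp}$ factors I would obtain
\begin{equation*}
\Pr\left(\sum_{i=1}^{N}X_i \geq (p+\lambda)N\right) \leq e^{-s\lambda N}\,\big(\mathbb{E}[e^{sZ}]\big)^N \leq \exp\!\left(-s\lambda N + \tfrac{s^2}{8}N\right).
\end{equation*}
The final step is to optimise the still-free parameter $s$: minimising the exponent $-s\lambda N + s^2 N/8$ over $s>0$ gives $s^\ast = 4\lambda$, and substituting yields exponent $-4\lambda^2 N + 2\lambda^2 N = -2\lambda^2 N$, which is precisely the claimed bound $\exp(-2\lambda^2 N)$. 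The two routine verifications are the MGF factorisation (immediate from independence) and the optimisation (a one-line calculus exercise); the genuinely substantive content is Hoeffding's lemma, and in particular the variance bound $1/4$ for a variable whose range has length one.
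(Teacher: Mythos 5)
Your proposal is correct: the Chernoff bounding step (Markov applied to $e^{s\sum_i X_i}$), the factorisation of the MGF by independence, Hoeffding's lemma $\mathbb{E}[e^{sZ}]\leq e^{s^2/8}$ for the centred variable $Z=X_1-p$ via the tilted-measure variance bound $\psi''(\xi)\leq 1/4$, and the optimisation $s^{\ast}=4\lambda$ giving exponent $-2\lambda^2 N$ are all sound. Note, however, that the paper does not prove this lemma at all --- it imports it with a citation to Hoeffding's 1963 paper and uses it as a black box in the proof of Proposition 3 --- so there is no internal proof to compare against; your argument is precisely the classical one from the cited literature (Hoeffding's original derivation bounds the MGF by convexity of $z\mapsto e^{sz}$ on $[-p,1-p]$ rather than by the second-order Taylor/tilted-variance route you use, but both yield the same $e^{s^2/8}$ bound and are equally standard).
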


To derive $n_1$, we consider the case that the necessary condition 1 in Proposition \ref{pro:NecessaryBd} fails. That is,
\begin{align*}
\left|\Omega\right| & \ge nm-m^{2}+m \\
& =nm ( \theta+(1-\theta-\frac{m}{n}+\frac{1}{n} ) ).
\end{align*}
Based on Hoeffding's inequality, the probability of this event is upper bounded by 
\begin{equation*}
\exp\left( - 2 \left(1-\theta-\frac{m}{n}+\frac{1}{n}\right)^{2} mn \right).
\end{equation*}
If this probability is smaller than $\epsilon$, it follows that 
\begin{equation*}
\left( \left(1-\theta\right)n - \left(m-1\right) \right)^{2} + \frac{\ln\epsilon}{2m}n \ge 0. 
\end{equation*}
The left hand side of the above inequality is quadratic in $n$, which after some elementary calculations leads to
\begin{align*}
n&\ge n_1 = \frac{m-1}{1-\theta}\left[1-\frac{\ln\epsilon}{4m\left(m-1\right)\left(1-\theta\right)}\right.\\&\quad\left.+\sqrt{\left(1-\frac{\ln\epsilon}{4m\left(m-1\right)\left(1-\theta\right)}\right)^{2}-1}\right].
\end{align*}

The derivation of $n_2$ is similar. Consider the probability that the necessary condition 2 in Proposition \ref{pro:NecessaryBd} fails:
\begin{align*}
& 1-\Pr\left(\forall i \in \left[m\right], \; \left|\Omega_{i}^{c}\right|\ge m-1\right)\\
& =\Pr\left(\exists i\in\left[m\right], \; \left|\Omega_{i}\right| \ge n-m+1\right)\\
& \le m\Pr\left(\left|\Omega_{1}\right|\ge n-m+1\right)\\
& =m\Pr\left(\left|\Omega_{1}\right|\ge n\left(\theta+\left(1-\theta-\frac{m-1}{n}\right)\right)\right),
\end{align*}
where the inequality in the third line follows from the union bound. After applying Hoeffding's inequality and setting the upper bound less than $\epsilon$ we obtain
\begin{equation*}
\left( \left(1-\theta\right)n - \left(m-1\right) \right)^{2} + \frac{\ln\epsilon-\ln m}{2}n \ge 0.
\end{equation*}
It follows that
\begin{align*}
n & \ge n_2 = \frac{m-1}{1-\theta}\left[1-\frac{\ln\epsilon-\ln m}{4\left(m-1\right)\left(1-\theta\right)}\right.\\
& \quad\left.+\sqrt{\left(1-\frac{\ln\epsilon-\ln m}{4\left(m-1\right)\left(1-\theta\right)}\right)^{2}-1}\right].
\end{align*}

To derive $n_3$, we define the following event
\begin{itemize}
	\item $S_{i,i^{\prime}}$: For given $i \neq i^{\prime}$, $\exists j \in[n]$ such that $X_{i,j} = 0$ and $X_{i^{\prime},j} \ne 0$.
\end{itemize}
Then the probability that the necessary condition 3 fails is given
by 
\begin{align*}
& 1-\Pr\left(\cap_{i\ne i^{\prime}} S_{i,i^{\prime}} \right) = \Pr \left( \cup_{i\ne i^{\prime}} S_{i,i^{\prime}}^{c}\right) \\
& \le m\left(m-1\right)\Pr\left(S_{1,2}^{c}\right)\\
& =m\left(m-1\right)\left(1-\theta\left(1-\theta\right)\right)^{n},
\end{align*}
where the inequality in the second line follows from the union bound. If we set this probability to be smaller than $\epsilon$, we obtain 
\begin{align*}
n & \ge n_{3}=\frac{\ln\epsilon-\ln m-\ln\left(m-1\right)}{\ln\left(1-\theta\left(1-\theta\right)\right)}.
\end{align*}

\bibliographystyle{IEEEtran}
\bibliography{TS_LS}

\end{document}